\documentclass[sigconf]{acmart}
\usepackage{enumitem}
\usepackage{amsmath}
\usepackage{amsthm}
\usepackage{algorithm}
\usepackage{algorithmic}
\usepackage{siunitx}
\usepackage{multirow}
\usepackage{tabularx}
\usepackage{booktabs}
\usepackage{fancyhdr}
\DeclareMathOperator*{\argmax}{arg\,max}

\usepackage{graphicx}
\graphicspath{
  {images/}
}
\AtBeginDocument{%
  }
\copyrightyear{2026}
\acmYear{2026}
\setcopyright{cc}
\setcctype{by-nc-nd}
\acmConference[WWW '26]{Proceedings of the ACM Web Conference 2026}{April 13--17, 2026}{Dubai, United Arab Emirates}
\acmBooktitle{Proceedings of the ACM Web Conference 2026 (WWW '26), April 13--17, 2026, Dubai, United Arab Emirates}
\acmPrice{}
\acmDOI{10.1145/3774904.3792976}
\acmISBN{979-8-4007-2307-0/2026/04}

\acmSubmissionID{web1210}

\begin{document}


\title{Belief-Driven Multi-Agent Collaboration via Approximate Perfect Bayesian Equilibrium for Social Simulation}

\author{Weiwei Fang}
\email{311137@whut.edu.cn}
\affiliation{
  \institution{Wuhan University of Technology}
  \city{Wuhan}
  \country{China}
  }

\author{Lin Li}
\authornote{Corresponding author}
\email{cathylilin@whut.edu.cn}
\affiliation{
  \institution{Wuhan University of Technology}
  \city{Wuhan}
  \country{China}
}

\author{Kaize Shi}
\email{Kaize.Shi@unisq.edu.au}
\affiliation{
  \institution{University of Southern Queensland}
  \city{Toowoomba}
  \country{Australia}
}

\author{Yu Yang}
\email{yangyy@eduhk.hk}
\affiliation{
  \institution{The Education University of Hong Kong}
  \city{Hong Kong}
  \country{China}
}

\author{Jianwei Zhang}
\email{zhang@iwate-u.ac.jp}
\affiliation{
  \institution{Iwate University}
  \city{Morioka}
  \country{Japan}
}


\begin{abstract}
High-fidelity social simulation is pivotal for addressing complex Web societal challenges, yet it demands agents capable of authentically replicating the dynamic spectrum of human interaction. Current LLM-based multi-agent frameworks, however, predominantly adhere to static interaction topologies, failing to capture the fluid oscillation between cooperative knowledge synthesis and competitive critical reasoning seen in real-world scenarios. This rigidity often leads to unrealistic ``groupthink'' or unproductive deadlocks, undermining the credibility of simulations for decision support. To bridge this gap, we propose \textit{BEACOF}, a \textit{belief-driven adaptive collaboration framework} inspired by Perfect Bayesian Equilibrium (PBE). By modeling social interaction as a dynamic game of incomplete information, BEACOF rigorously addresses the circular dependency between collaboration type selection and capability estimation. Agents iteratively refine probabilistic beliefs about peer capabilities and autonomously modulate their collaboration strategy, thereby ensuring sequentially rational decisions under uncertainty. Validated across adversarial (judicial), open-ended (social) and mixed (medical) scenarios, BEACOF prevents coordination failures and fosters robust convergence toward high-quality solutions, demonstrating superior potential for reliable social simulation. Source codes and datasets are publicly released at: https://github.com/WUT-IDEA/BEACOF.
\end{abstract}

\begin{CCSXML}
<ccs2012>
   <concept>
       <concept_id>10010147.10010341</concept_id>
       <concept_desc>Computing methodologies~Modeling and simulation</concept_desc>
       <concept_significance>500</concept_significance>
       </concept>
   <concept>
       <concept_id>10010147.10010178.10010219.10010220</concept_id>
       <concept_desc>Computing methodologies~Multi-agent systems</concept_desc>
       <concept_significance>500</concept_significance>
       </concept>
</ccs2012>
\end{CCSXML}

\ccsdesc[500]{Computing methodologies~Modeling and simulation}
\ccsdesc[500]{Computing methodologies~Multi-agent systems}

\keywords{Social Simulation, Multi-Agent Collaboration, Perfect Bayesian
Equilibrium, Large Language Models}


\maketitle

\section{Introduction}

\begin{figure}
    \centering
    \includegraphics[width=\linewidth]{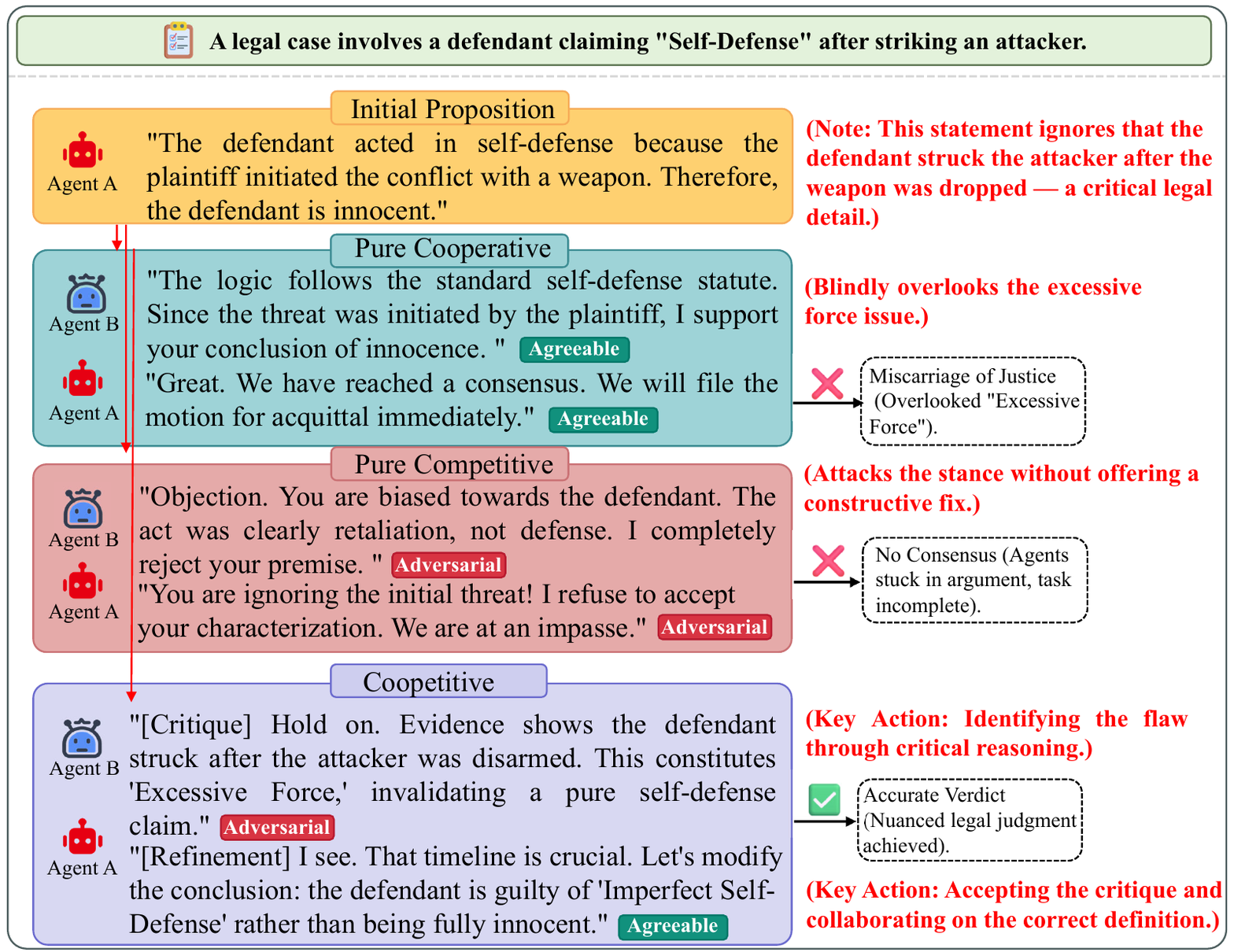}
    \caption{Comparison of three collaboration types in a judicial deliberation task.}
    \Description{Scenario: A defendant claims self-defense after striking an attacker. The figure presents three parallel traces for the same dialogue start under different interaction regimes. Fixed cooperation: Agent B blindly agrees, overlooking the timeline where the attacker was disarmed, leading to miscarriage (overlooked excessive force). Fixed competition: Agent B adversarially rejects the premise without constructive correction, producing deadlock and no consensus. Our belief-driven dynamic switching: Agent B first switches to critical competition to highlight the excessive-force issue; Agent A then switches to cooperation to accept the critique and refine the verdict to ``imperfect self-defense.'' The figure illustrates how dynamic switching between collaboration types identifies flaws and enables a nuanced, correct judgment.}
    \label{fig:motivation}
\end{figure}

Simulating complex societal dynamics constitutes a fundamental challenge in the pursuit of ``Web for Good,'' analogous to recent advancements in agent-based user simulation for web ecosystems ~\cite{DBLP:conf/www/ZhangLZY24, DBLP:conf/cikm/Gao24, DBLP:journals/tkde/LiCLLYX23}. It enables researchers and policymakers to anticipate the impacts of interventions in critical domains ranging from judicial fairness~\cite{DBLP:journals/toit/HuLLS21} to public health consensus~\cite{DBLP:journals/corr/abs-2501-06322}. In this context, Large Language Model (LLM)-based multi-agent systems have emerged as a transformative paradigm for high-fidelity social simulation~\cite{park2023generative, ziems2024can, mou2024individual}. By populating digital environments with agents capable of human-like reasoning and interaction, these systems offer a microcosm for studying collective behavior and problem-solving without the ethical risks of real-world experimentation~\cite{DBLP:journals/chinaf/XiCGHDHZWJZZFWXZWJZLYDW25}. However, the fidelity of such social simulations hinges critically on the nature of \textit{collaboration}. Specifically, we define this fidelity as the capacity to \textit{autonomously modulate} collaboration modes, mirroring the fluid transitions between consensus and conflict inherent in real-world social dynamics. Authentic societal interactions---whether in a courtroom debate, a medical consultation, or an open social dialogue---are rarely static; they fluctuate dynamically between cooperative knowledge synthesis and competitive critical reasoning. Consequently, enabling agents to autonomously navigate this ``coopetition'' spectrum is not merely a technical optimization, but a prerequisite for modeling socially responsible outcomes that avoid the pitfalls of echo chambers or polarization ~\cite{DBLP:conf/www/Cinelli21, DBLP:conf/kdd/Gallegos24}, which are pervasive issues in current social web platforms.

Current frameworks predominantly adhere to static topologies. Exclusively cooperative models (e.g., CAMEL \cite{DBLP:conf/nips/LiHIKG23}, MetaGPT \cite{DBLP:conf/iclr/HongZCZCWZWYLZR24}) risk “groupthink” and error amplification [3], while strictly competitive approaches (e.g., MAD \cite{DBLP:conf/emnlp/Liang0JW00Y0T24}) often succumb to unproductive deadlocks \cite{DBLP:journals/corr/abs-2509-05396, DBLP:conf/icml/Du00TM24}.
Although heuristic or rule-based methods attempt to bridge this gap \cite{DBLP:conf/acl/QianLLCDL0CSCXL24, DBLP:journals/corr/abs-2306-03314, 11202900}, they lack the granularity to adaptively optimize strategies based on evolving states \cite{DBLP:conf/acl/ZhangX0LHD24}. 
As illustrated in \figurename~\ref{fig:motivation}, this rigidity leads to failure in complex tasks like judicial deliberation: where fixed collaboration type results in sycophantic agreement or impasse, dynamic switching enables agents to transition from adversarial critique to cooperative refinement, ultimately converging on a legally nuanced verdict.

Realizing such adaptivity, however, is non-trivial due to the fundamental challenge of \textit{strategic decision-making under incomplete information}—specifically, how agents can make rational collaboration decisions when peer capabilities are unobservable and must be inferred from noisy interaction signals. Since true capability is unobservable and textual feedback is often marred by hallucinations or inconsistency \cite{huang2025survey, turpin2023language}, naive estimation is prone to instability \cite{DBLP:conf/icml/NayakCDD0B23} or slow adaptation \cite{DBLP:journals/corr/abs-2501-06322}. More critically, adaptive switching introduces a \textit{circular dependency problem}: collaboration type selection depends on belief estimates of peer capabilities, yet belief updates are themselves influenced by the chosen collaboration mode. Without principled coordination mechanisms, this interdependence can lead to erratic oscillations between strategies or premature convergence to suboptimal equilibria.

To address this fundamental challenge, we propose \textit{BEACOF}, a \textit{belief-driven adaptive collaboration framework} grounded in Approximate Perfect Bayesian Equilibrium (PBE) theory~\cite{foerster2019bayesian, DBLP:journals/corr/abs-2506-08292}. By employing a tractable approximation to circumvent the computational intractability of exact inference within high-dimensional continuous type spaces, our framework provides a rigorous solution to the circular dependency problem by establishing \textit{sequential rationality}—ensuring agents make rational collaboration decisions given their current beliefs—while maintaining \textit{belief consistency} through principled Bayesian updates. By modeling collaboration as a dynamic game, BEACOF successfully overcomes three critical challenges: it establishes strategic coordination where collaboration types are rational responses to anticipated behaviors; it ensures belief stabilization via theoretical convergence guarantees; and it enables real-time adaptive optimality unlike static approaches.

We comprehensively validate our framework across three distinct and challenging scenarios: adversarial (court debate), open-ended (persona-based dialogue), and mixed (medical Q\&A). Results demonstrate our approach achieves optimal or near-optimal performance in individual scenarios and superior cross-scenario generalization compared to baselines.

Our main contributions are summarized as follows: 
\textbf{(1)} We propose BEACOF, a belief-driven adaptive collaboration framework inspired by Perfect Bayesian Equilibrium that enables agents to autonomously switch collaboration types to align with interaction dynamics. 
\textbf{(2)} We address the fundamental challenge of strategic decision-making under incomplete information by introducing a tractable approximation mechanism that decouples belief estimation from strategy selection. This enables agents to make rational collaboration decisions, avoiding the erratic oscillations and suboptimal convergence inherent in ad-hoc switching approaches. 
\textbf{(3)} Extensive experiments demonstrate that BEACOF outperforms baselines, achieving gains of up to 3.4 points in F1 scores for adversarial settings against competitive baselines, improving accuracy by over 24 points against competitive baselines in mixed scenarios, and reducing persona contradiction by approximately 12.7 points while increasing diversity by over 10 points in open-ended dialogue, highlighting the framework's superior generalization capabilities.

\section{Related Work}
\subsection{Cooperative Multi-Agent Collaboration}
Cooperative paradigms prioritize role-based collaboration. CAMEL \cite{DBLP:conf/nips/LiHIKG23} pioneered role-playing with bidirectional protocols for task decomposition. MetaGPT \cite{DBLP:conf/iclr/HongZCZCWZWYLZR24} structured this via Standardized Operating Procedures (SOPs) and role-specific workflows. Similarly, AutoGen \cite{DBLP:journals/corr/abs-2308-08155} provides infrastructure for complex multi-turn dialogues. Other works utilize chain-of-thought \cite{DBLP:conf/nips/Zhang0CPZA24} for collective reasoning or multi-round voting \cite{DBLP:conf/acl/ChenSB24} for consensus. Crucially, however, these methods rely on fixed interaction topologies, lacking the flexibility to dynamically adapt based on real-time assessment.

\subsection{Competitive Multi-Agent Collaboration} Conversely, competitive and debate-based systems leverage adversarial interactions for error detection and solution refinement. Multi-Agent Debate (MAD) \cite{DBLP:conf/emnlp/Liang0JW00Y0T24} employs argumentation strategies where agents engage in adversarial discussions to surface weaknesses in proposed solutions, effectively addressing the degeneration-of-thought problem often observed in single-agent reasoning. Du et al. \cite{DBLP:conf/icml/Du00TM24} demonstrated that such multi-agent debate can significantly enhance mathematical and strategic decision-making capabilities. Extensions of this paradigm include diverse debate configurations \cite{DBLP:conf/icml/SmitGDBP24} and evaluation-focused systems \cite{DBLP:conf/iclr/ChanCSYXZF024}, where multiple LLM agents act as adversarial referees to assess and critique text quality through structured disagreement.

\subsection{Game-Theoretic Approaches to Multi-Agent Collaboration}

Game theory provides a principled foundation for modeling strategic interactions and reasoning under uncertainty. Prominent applications in multi-agent systems have demonstrated its efficacy across diverse domains, ranging from mechanism design for incentive optimization \cite{DBLP:conf/nips/ZhangC21b, DBLP:conf/iclr/EilatFBR23} and auction-theoretic resource allocation \cite{DBLP:conf/aaai/ThomaBS25} to evolutionary learning of emergent behaviors \cite{DBLP:journals/corr/abs-2412-20523, DBLP:conf/nips/ChristianosSA20}. Furthermore, Bayesian game formulations have been extensively employed to model scenarios with incomplete information, providing theoretical frameworks for agents to reason about hidden opponent types and update beliefs based on observed actions \cite{DBLP:journals/corr/abs-2506-08292, DBLP:journals/jmlr/ZhangKBY23}.

In summary, existing literature reveals two fundamental limitations in multi-agent collaboration: (1) structural rigidity in interaction modes, where fixed cooperative frameworks are susceptible to error amplification \cite{DBLP:journals/corr/abs-2503-13657}, while purely competitive approaches often succumb to deadlock \cite{DBLP:journals/corr/abs-2509-05396}; and (2) insufficient theoretical adaptability, as prior game-theoretic applications predominantly focus on static scenarios \cite{DBLP:conf/aaai/ThomaBS25} or idealized belief modeling within fixed-motive settings, failing to support dynamic transitions between diverse collaboration types.

\begin{figure*}[t]
  \centering
  \includegraphics[width=\textwidth]{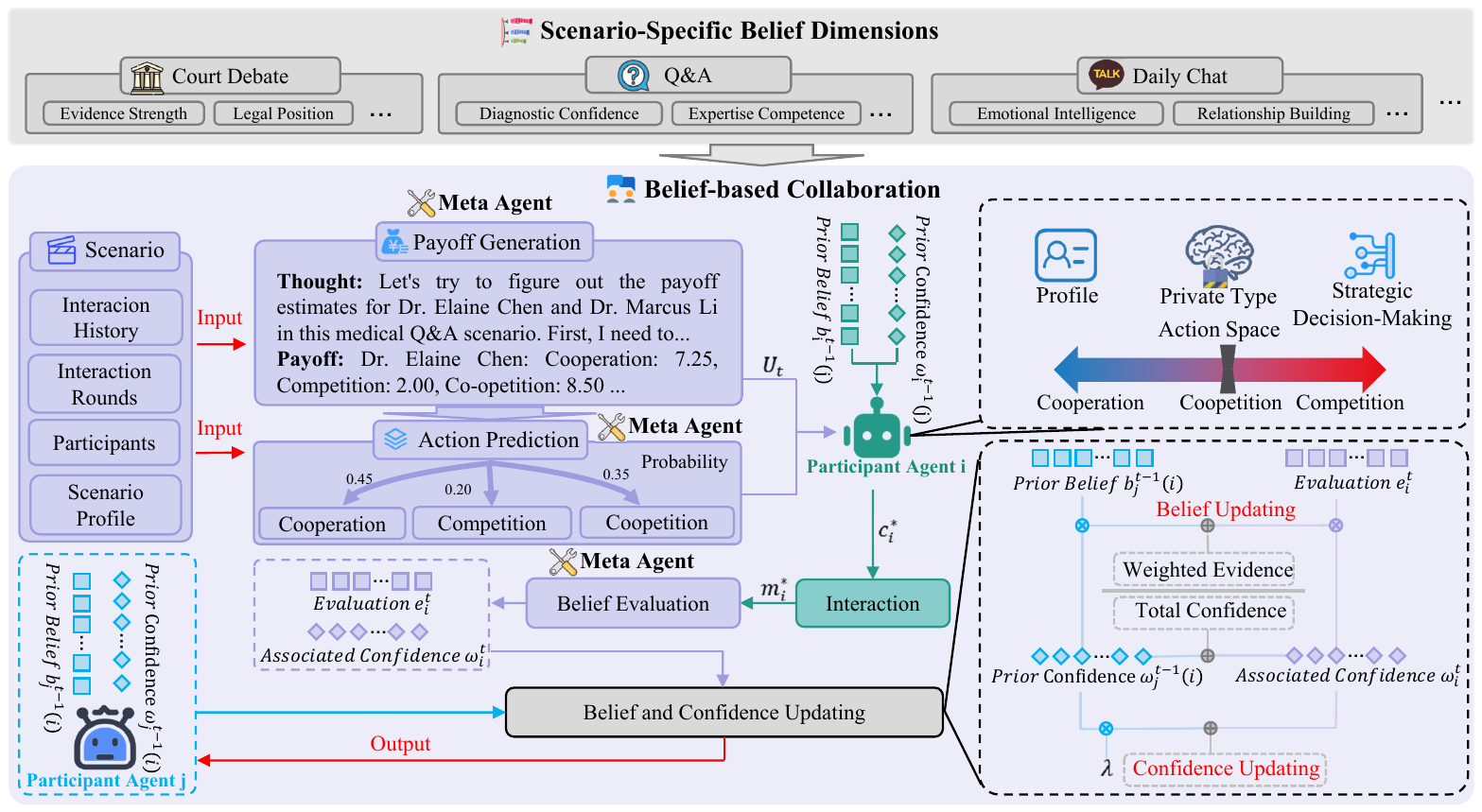}
  \caption{\textbf{Overview of the Belief-driven Adaptive Collaboration Framework (BEACOF).} The framework models collaboration as a dynamic game of incomplete information, applicable across diverse scenarios with specific belief dimensions (top panel). The central workflow executes as follows at round $t$: \textbf{(1) Meta-Agent Coordination:} The centralized Meta-Agent utilizes scenario history to generate contextual payoffs $U_t$ and predict probability distributions over agent collaboration types. \textbf{(2) Agent Strategic Action:} A Participant Agent $i$, conditioned on its private profile and the Meta-Agent's outputs, computes an approximate best response strategy $c_i^*$ and generates an interaction message $m_i^*$. \textbf{(3) Evaluation \& Belief Update:} The Meta-Agent evaluates $m_i^*$ to produce a capability estimate tuple $(e_i^t, \omega_i^t)$. The \textbf{dashed callout box on the right} details the critical Gaussian belief update mechanism: other peers (e.g., Agent $j$, bottom left) refine their prior belief estimates regarding agent $i$, denoted as $\mathbf{b}_j^{t-1}(i)$, by integrating this new evidence $e_i^t$ weighted by confidence scores and a forgetting factor $\lambda$. This cyclic process drives the evolution of beliefs and strategic adaptation.}
  \Description{As illustrated in Figure~\ref{fig:framework}, the proposed BEACOF framework operates as a closed-loop dynamic system. The upper panel demonstrates the framework's adaptability, where belief dimensions are instantiated specifically for different scenarios (e.g., legal metrics for court debates vs. emotional metrics for daily chats). The central execution flow at round $t$ involves three key phases: \begin{itemize} \item \textbf{Coordination (Center Left):} The Meta-Agent first analyzes the interaction history to synthesize contextual payoffs $U_t$ and predict probability distributions over collaboration types. \item \textbf{Strategic Action (Center Right):} Conditioned on these global signals and its own private profile (top-right callout), Agent $i$ computes an approximate best response strategy $c_i^*$ to generate a message $m_i^*$. \item \textbf{Belief Evolution (Bottom Right):} The cycle closes with the evaluation and update mechanism. As detailed in the dashed callout box, peer agents (e.g., Agent $j$) utilize the Meta-Agent's evaluation tuple $(e_i^t, \omega_i^t)$ to refine their Gaussian belief estimates $\mathbf{b}_j^{t-1}(i)$ via a confidence-weighted update rule, thereby driving continuous strategic adaptation. \end{itemize}}
  \label{fig:framework}
\end{figure*}

\section{Dynamic Game Formulation}
\label{sec:problem_formulation}

LLM-based multi-agent collaboration inherently involves strategic decision-making under uncertainty, where agents must act without full knowledge of other agents' capabilities or intents. To rigorously model these, we ground our framework in the canonical theory of \textit{dynamic games of incomplete information} \cite{DBLP:conf/nips/EstornellL24}. Following these recent formulations, we map the vague linguistic interactions of agents into a structured tuple of Types, Actions, and Beliefs.

The game environment is defined as a tuple
 $\mathcal{G} = \langle \mathcal{N}, \Theta, \mathcal{A},\allowbreak\mathcal{H}, \mathcal{U}, T \rangle$,
proceeding over discrete time steps $t = 1, \dots, T$. In this formulation, the term ``dynamic'' explicitly characterizes the evolution of information states rather than physical parameters. Specifically, while the intrinsic capabilities (Types $\Theta$) of agents remain static latent variables, the \textit{public history} $\mathcal{H}$ and the agents' internal \textit{belief estimates} regarding peers are time-varying states that accumulate and evolve at each round $t$. This sequential information update drives the adaptive strategy selection, distinguishing our framework from static one-shot interactions.

\textbf{Agents ($\mathcal{N}$).} Let $\mathcal{N} = \{1, \dots, n\}$ be the set of participant agents. Each agent $i \in \mathcal{N}$ acts as a strategic player. Additionally, a meta-agent serves as the mechanism designer to coordinate the process and provide evaluations.

\textbf{Types ($\Theta$).} To model intrinsic agent capabilities, each agent $i$ possesses a private type vector $\theta_i \in \Theta_i \subseteq [0,1]^d$, where $\Theta_i$ is the type space of agent $i$ and $d$ denotes the dimension of capabilities (e.g., logic, rhetoric, empathy). 
Crucially, while the true type $\theta_i$ is static throughout the game, it is strictly private and unobservable to other agents $j \neq i$. Consequently, agents must maintain a dynamic belief state: at round $t$, agent $i$ holds a point estimate $\mathbf{b}_i^t(j) \in \Theta_j$ regarding agent $j$'s type, which evolves based on interaction history.

\textbf{Action Space ($\mathcal{A}$).} Let $\mathcal{A}_i$ denote the action space for agent $i$. At each round $t$, agent $i$ selects an action $a_i^t \in \mathcal{A}_i$, defined as a tuple of intent and execution: $ a_i^t = (c_i^t, m_i^t)$, 
where $c_i^t \in \mathcal{C} = \{\text{Cooperation}, \text{Competition}, \text{Coopetition}\}$ is the discrete strategy, and $m_i^t \in \mathcal{M}$ is the textual response. The joint action space is $\mathcal{A} = \prod_{i \in \mathcal{N}} \mathcal{A}_i$.

\textbf{Public History} ($\mathcal{H}$). Let $e_i^t \in [0,1]^d$ denote the evaluation vector for the message $m_i^t$ provided by the meta-agent. We define the public history available at the beginning of round $t$ as $H_{t-1} = \{(c_j^k, m_j^k, e_j^k) \mid j \in \mathcal{N}, 1 \leq k < t\}$,  which aggregates all interaction tuples from preceding rounds.

\textbf{Contextual Utility} ($\mathcal{U}$). The utility function represents the payoff derived from the outcome of interactions. Unlike static games, payoffs are dynamic and context-dependent: $u_i^t: \mathcal{A} \times \mathcal{H} \to \mathbb{R}$, where $u_i^t(a_i, H_{t-1})$ quantifies the gain of taking action $a_i$ given the current history. This is evaluated by the meta-agent based on task contribution and alignment with the chosen strategy.

\section{Methodology}
\label{sec:methodology}
Modeling multi-agent collaboration inherently involves strategic decision-making under uncertainty, where agents act without full knowledge of peer capabilities. 
To rigorously operationalize this process, we propose \textit{BEACOF}, a framework that formulates the interaction as a finite-horizon \textit{dynamic game of incomplete information}. 
As illustrated in \figurename~\ref{fig:framework}, our approach leverages an Approximate Perfect Bayesian Equilibrium (PBE) mechanism to rigorously couple belief updates with strategy selection. 
This structure enables agents to maintain \textit{sequential rationality}—dynamically optimizing collaboration modes from cooperative synthesis to competitive critique—consistent with the evolving interaction history, thereby maximizing cumulative utility across diverse tasks.

\subsection{Overview}
\label{subsec:overview}
To operationalize Approximate PBE, we design a dual-layer architecture coupling strategic participant agents with a centralized meta-agent coordinator. All components are instantiated as LLMs driven by structured prompts, translating abstract game-theoretic calculus into executable natural language processes.

The complete procedure is detailed in Algorithm~\ref{alg:framework}, which orchestrates the interaction between the meta-agent and participants, ensuring the synchronization of belief updates and strategic choices.

\textbf{Meta-Agent.} At each round $t$, the meta-agent acts as a centralized coordinator. Given the public history $H_{t-1}$ and the task state, it first constructs payoff vectors $\{U_t^i\}_{i \in \mathcal{N}}$ by scoring the desirability of each collaboration type (\textbf{Line 4}). It then predicts a probability distribution $\hat{P}_t(c_i)$ over collaboration types for each agent (\textbf{Line 5}), and broadcasts these global signals to all participants (\textbf{Line 6}). Crucially, in the evaluation phase, the meta-agent assesses the generated message $m_i^t$ to produce a tuple $(e_i^t, \omega_i^t)$ (\textbf{Line 11}), where $e_i^t \in [0,1]^d$ is the capability estimate and $\omega_i^t \in \mathbb{R}^+$ is the associated evaluation confidence. These functions are implemented via structured prompts (see App.~\ref{app:prompts}). Crucially, this decoupling offloads global state tracking to the meta-agent, preserving participant agents' limited context windows for local reasoning and persona adherence, which is vital for maintaining coherence in open-ended scenarios.

\textbf{Participant Agents.} Each participant agent $i$ is instantiated with a static role designation $r_i$ (e.g., ``Plaintiff'' in court debate) and maintains Gaussian belief estimates $({b}_i^t(j), \omega_i^t(j))$ regarding peer $j$. At round $t$, agent $i$ receives the payoff $U_t$ and predicted type distributions from the meta-agent. Conditioned on these signals, the agent computes an approximate best response $c_i^*$ via Eq.~\eqref{eq:sequential_rationality} (\textbf{Line 8}) \textbf{to maximize expected utility under uncertainty}, and subsequently generates a message $m_i^* = \mathrm{LLM}(c_i^*, r_i, H_{t-1})$ (\textbf{Line 9}), \textbf{thereby translating the abstract strategic intent into a concrete textual response that strictly aligns with its persona}. Upon receiving the evaluation tuple $(e_i^t, \omega_i^t)$ from the meta-agent, all other agents $j \neq i$ update their beliefs regarding agent $i$ using the parametric Bayesian update rule defined in Eq.~\eqref{eq:belief_update} (\textbf{Lines 14--15}). The interaction continues until the belief convergence criterion is met (\textbf{Lines 18--20}).

\subsection{Approximate Perfect Bayesian Equilibrium}
\label{subsubsec:pbe_formulation}
Strictly enforcing PBE consistency is computationally intractable in high-dimensional continuous type spaces~\cite{DBLP:journals/corr/abs-2501-06322, DBLP:journals/fcsc/WangMFZYZCTCLZWW24}. To address this, we propose a tractable approximation rooted in bounded rationality~\cite{zheng2022ai} that relaxes strict requirements: we substitute exact integration with LLM-based reasoning for \textbf{Sequential Rationality} (Sec.~\ref{subsubsec:sequential}) and employ a parametric Gaussian assumption for \textbf{Belief Consistency} (Sec.~\ref{subsubsec:belief}).

\begin{algorithm}[t]\caption{Adaptive Multi-Agent Collaboration Framework}
\label{alg:framework}  
\begin{algorithmic}[1]
  \STATE \textbf{Input:} Task specification $T$, roles $R = \{r_1, \ldots, r_n\}$, threshold $\epsilon$, patience $K$, forgetting factor $\lambda$
  \STATE \textbf{Initialize:} $H_0 = \emptyset$; for all $i,j \in \mathcal{N}$, set $\mathbf{b}_i^0(j) = 0.5 \cdot \mathbf{1}_d$, $\omega_i^0(j) = \omega_{\text{init}}$\FOR{round $t = 1, 2, \ldots$}
  \STATE $U_t \leftarrow$ \textsc{GenerateContextualPayoffs}$(H_{t-1}, T)$
  \STATE $\{\hat{P}_t(c_i)\}_{i \in \mathcal{N}} \leftarrow$ \textsc{PredictAgentActions}$(H_{t-1}, U_t)$
  \STATE Broadcast $(U_t, \{\hat{P}_t(c_i)\}_{i \in \mathcal{N}})$ to all participant agents\FOR{each agent $i \in \mathcal{N}$}
  \STATE \COMMENT{Agent selects strategy and generates message}
  \STATE $c_i^* \leftarrow \argmax_{c \in \mathcal{C}} \mathbb{E}_{c_{-i} \sim \hat{P}_t} [U_t^i(c, c_{-i})]$
  \STATE $m_i^* \leftarrow$ \textsc{GenerateMessage}$(\mathrm{agent}_i, c_i^*, r_i, H_{t-1})$
  \STATE \COMMENT{Meta-agent evaluates message and confidence}
  \STATE $(e_i^t, \omega_i^t) \leftarrow$ \textsc{EvaluateMessage}$(\mathrm{agent}_{\mathrm{meta}}, m_i^*, c_i^*, T)$
  \STATE Update history: $H_t \leftarrow H_{t-1} \cup \{(m_i^*, e_i^t, \omega_i^t)\}$\FOR{each agent $j \in \mathcal{N}, j \neq i$}
  \STATE \COMMENT{Belief update with Gaussian approximation}
  \STATE $\mathbf{b}_j^t(i) \leftarrow \frac{\omega_j^{t-1}(i) \cdot \mathbf{b}_j^{t-1}(i) + \omega_i^t \cdot e_i^t}{\omega_j^{t-1}(i) + \omega_i^t}$
  \STATE $\omega_j^t(i) \leftarrow \lambda \cdot \omega_j^{t-1}(i) + \omega_i^t$
  \ENDFOR
  \ENDFOR
  \IF{\textsc{EarlyStopping}$(\{\mathbf{b}_i^k\}, \epsilon, K)$ is true}
  \STATE \textbf{break}
  \ENDIF
  \ENDFOR
  \STATE \textbf{Output:} Final solution synthesized from $H_t$
\end{algorithmic}
\end{algorithm}

\subsubsection{Approximate Sequential Rationality}
\label{subsubsec:sequential}
To ensure adaptive performance in dynamic environments, at each time step, agent $i$ selects a collaboration strategy $c_i^t$ that maximizes the expected utility against the predicted actions of opponents. Crucially, this prediction is conditioned on the estimated types $\mathbf{b}$ inferred from historical observations. Let $\pi(c_{-i} \mid \mathbf{b}_{-i})$ denote the predicted distribution of all opponents' strategies given their estimated types. The agent's decision rule approximates a \textit{Best Response} \cite{bakhtin2022human}:
\begin{equation}\label{eq:sequential_rationality}
    c_i^* = \argmax_{c \in \mathcal{C}} \mathbb{E}_{c_{-i} \sim \pi(\cdot \mid \mathbf{b}_{-i})} \left[ U_i(c, c_{-i} \mid H_{t-1}) \right],
\end{equation}
where $U_i$ represents the contextual utility evaluated by the meta-agent. In our implementation, the expectation operation $\mathbb{E}$ and the prediction $\pi$ are approximated via LLM-based reasoning rather than explicit numerical integration.

\subsubsection{Belief Update with Confidence Decay}
\label{subsubsec:belief}
Instead of performing intractable exact inference over the continuous type space, we adopt a parametric Bayesian approximation inspired by recent advancements in latent reasoning for agents \cite{foerster2019bayesian}. We model the belief distribution as a multivariate Gaussian with isotropic precision, maintaining only the first moment (estimate $\mathbf{b}$) and a scalar precision (confidence $\omega$) to track uncertainty.

At round $t$, the meta-agent provides an evaluation $e_j^t$ with an associated confidence $\omega_j^t$, acting as an observation. Under the Gaussian assumption (Normal-Normal conjugate prior), the posterior mean is derived via the standard inverse-variance weighted update \cite{DBLP:conf/icml/NayakCDD0B23}:
\begin{equation}\label{eq:belief_update}
    \mathbf{b}_i^t(j) = \frac{\omega_i^{t-1}(j) \cdot \mathbf{b}_i^{t-1}(j) + \omega_j^t \cdot e_j^t}{\omega_i^{t-1}(j) + \omega_j^t}.
\end{equation}

To account for non-stationary agent behaviors, we introduce a forgetting factor $\lambda \in (0, 1]$ that artificially inflates the posterior variance (reduces precision) at each step:
\begin{equation}
    \omega_i^t(j) = \lambda \cdot \omega_i^{t-1}(j) + \omega_j^t.
\end{equation}

This formulation draws from \textit{adaptive filtering theory} for tracking time-varying parameters \cite{DBLP:conf/icml/NayakCDD0B23}. It functions as a computationally efficient approximation, allowing agents to dynamically adjust their beliefs in response to shifting peer capabilities \cite{DBLP:conf/icml/NayakCDD0B23}.

To reduce computational cost while preserving solution quality, we employ an early stopping criterion based on belief stabilization. Intuitively, when an agent's estimated capabilities of peers converge to a steady state, additional interaction rounds yield diminishing returns for strategic adaptation.

For each agent $i$, we quantify the belief shift between rounds $t-1$ and $t$ as the normalized Euclidean distance:\begin{equation}\Delta_i^t= \frac{|\mathbf{b}_i^t - \mathbf{b}_i^{t-1}|_2}{\sqrt{n \cdot d}},\end{equation}where $\mathbf{b}_i^t$ concatenates all estimate vectors $\mathbf{b}_i^t(j)$ for $j \in \mathcal{N}$. The normalization factor $\sqrt{n \cdot d}$ ensures scale invariance across different agent counts and dimension sizes.

The framework terminates when the belief shift of at least one agent remains below a threshold $\epsilon$ for $K$ consecutive rounds:\begin{equation}\label{eq:termination_criterion}\exists i \in \mathcal{N}:\Delta_i^{t-k} < \epsilon, \quad\forall k \in {0, 1, \ldots, K-1}.\end{equation}

This criterion serves as a proxy for system equilibration, balancing strategy exploration with computational efficiency. A maximum horizon $T_{\max}$ acts as a failsafe against slow convergence.

\subsection{Theoretical Analysis: Beliefs Convergence}\label{subsec:theoretical_analysis}
In the context of dynamic games with incomplete information, the convergence of agents' beliefs is a critical prerequisite for the stability of a Perfect Bayesian Equilibrium \cite{foerster2019bayesian}. Without theoretical guarantees, the belief update mechanism in Eq.~\eqref{eq:belief_update} risks inducing cyclic or divergent behaviors, rendering the early stopping criterion (Eq.~\ref{eq:termination_criterion}) unreachable. To rigorously address this,  we analyze the asymptotic properties of our mechanism by formalizing it as a \textit{stochastic approximation process with a constant step-size} \cite{DBLP:journals/jmlr/ZhangKBY23}. Unlike standard decreasing step-size algorithms, our inclusion of a forgetting factor $\lambda < 1$ implies convergence to a \textit{bounded region} \cite{DBLP:conf/iclr/LeonardosOPP22}. We characterize this behavior below.

\begin{proposition}[Bounded Convergence of Belief Estimates]
\label{prop:belief_convergence}
Let the belief update follow Eq.~\eqref{eq:belief_update} with $\lambda \in (0,1)$, assuming the meta-agent's evaluation $e_t$ is an unbiased estimator of the true capability with bounded variance $\sigma^2$. As $t \to \infty$, the belief dynamics exhibit \textbf{Effective Memory Stabilization}, where the accumulated precision $\omega^t$ converges to a steady state $\omega_{\infty} \approx \frac{\mathbb{E}[\omega_{\text{new}}]}{1-\lambda}$, establishing a stable effective learning rate $\alpha \approx 1-\lambda$. Consequently, the system achieves \textbf{Mean-Square Stability}, ensuring that the belief estimate $\mathbf{b}^t$ does not diverge but converges to a neighborhood of the true parameter, with asymptotic error variance bounded by $\mathcal{O}((1-\lambda)\sigma^2)$.
\end{proposition}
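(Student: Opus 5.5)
We work with a single pair $(i,j)$ and a single capability coordinate, and write $\omega^t=\omega_i^t(j)$, $b^t=\mathbf{b}_i^t(j)$, $\theta=\theta_j$. Let $w_t=\omega_j^t$ be the fresh confidence and $e_t=e_j^t$ the fresh evaluation. The vector statement then follows coordinatewise, since the precision is isotropic.

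\textbf{Step 1: precision recursion.} Model the confidences $w_t$ as i.i.d.\ (or just stationary) with mean $\bar w=\mathbb{E}[\omega_{\text{new}}]$, bounded and independent of the evaluation noise. The linear recursion $\omega^t=\lambda\omega^{t-1}+w_t$ unrolls to
\begin{equation*}
\omega^t=\lambda^t\omega_{\text{init}}+\sum_{k=1}^t\lambda^{t-k}w_k .
\end{equation*}
Taking expectations gives $\mathbb{E}[\omega^t]=\lambda^t\omega_{\text{init}}+\bar w(1-\lambda^t)/(1-\lambda)\to\bar w/(1-\lambda)=\omega_\infty$. This is a contraction with factor $\lambda<1$, so convergence is geometric. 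The variance of the sum is also bounded, at most $\mathrm{Var}(w)/(1-\lambda^2)$. Hence $\omega^t$ concentrates around $\omega_\infty$, at least in the sense that its fluctuations are $O(1)$ while $\omega_\infty=\Theta(1/(1-\lambda))$.

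\textbf{Step 2: effective step size.} Rewrite Eq.~\eqref{eq:belief_update} in stochastic-approximation form,
\begin{equation*}
b^t=b^{t-1}+\alpha_t\,(e_t-b^{t-1}),\qquad \alpha_t=\frac{w_t}{\omega^{t-1}+w_t}.
\end{equation*}
Substituting $\omega^{t-1}\approx\omega_\infty$ gives $\alpha_t\approx w_t(1-\lambda)/(\bar w+(1-\lambda)w_t)$. Its leading order is $(1-\lambda)w_t/\bar w$, whose mean is $1-\lambda$. So $\alpha\approx1-\lambda$, with $\alpha_t\in(0,1)$ always. This approximation is the main obstacle: $\alpha_t$ depends on the random $\omega^{t-1}$, which is correlated with past noise. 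I would handle it by conditioning on the $\sigma$-field generated by $\{w_k\}$. The weights $\alpha_t$ are then deterministic, provided $e_t$ is conditionally unbiased given $w_t$. It is then enough to bound $\alpha_t$ by constants of order $1-\lambda$ for large $t$, which holds if $w_t\in[w_{\min},w_{\max}]$, and accept the "$\approx$" at the level of constants.

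\textbf{Step 3: mean-square stability.} Define the error $\delta_t=b^t-\theta$ and the noise $\xi_t=e_t-\theta$, with $\mathbb{E}[\xi_t]=0$ and $\mathbb{E}[\xi_t^2]\le\sigma^2$. The update becomes
\begin{equation*}
\delta_t=(1-\alpha_t)\delta_{t-1}+\alpha_t\xi_t .
\end{equation*}
Conditionally on the weights, unbiasedness gives $\mathbb{E}\delta_t=\prod_k(1-\alpha_k)\,\delta_0\to0$ geometrically, because $\alpha_k\ge c(1-\lambda)$. Squaring and using independence of $\xi_t$ from $\delta_{t-1}$ gives
\begin{equation*}
\mathbb{E}\delta_t^2=(1-\alpha_t)^2\mathbb{E}\delta_{t-1}^2+\alpha_t^2\sigma^2 .
\end{equation*}
With $\alpha_t\in[\alpha_-,\alpha_+]$, both of order $1-\lambda$, this is a contracting affine recursion. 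Its fixed point satisfies
\begin{equation*}
\limsup\mathbb{E}\delta_t^2\le\frac{\alpha_+^2\sigma^2}{1-(1-\alpha_-)^2}\approx\frac{\alpha\sigma^2}{2-\alpha}=\mathcal{O}((1-\lambda)\sigma^2).
\end{equation*}
The same bound can be read off directly as the exponentially weighted average $\delta_t\approx\sum_k\alpha(1-\alpha)^{t-k}\xi_k$, whose variance is $\alpha^2\sigma^2/(1-(1-\alpha)^2)$. Boundedness of $\delta_t$ is automatic as well, since $b^t$ is a convex combination of points in $[0,1]$.

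\textbf{Step 4: conclusion.} Summing over the $d$ coordinates, and over peers for the concatenated $\mathbf{b}_i^t$, preserves the $\mathcal{O}((1-\lambda)\sigma^2)$ bound per normalized coordinate. This yields the stated neighborhood convergence. Because $\delta_t$ has a stationary limiting distribution with this variance, the increments $b^t-b^{t-1}=\alpha_t(e_t-b^{t-1})$ are $O(1-\lambda)$ in mean square. This explains why the shift $\Delta_i^t$ used in Eq.~\eqref{eq:termination_criterion} eventually falls below $\epsilon$, provided $\epsilon$ is chosen larger than that noise scale.
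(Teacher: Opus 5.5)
Your proposal is correct and follows the same route as the paper's proof sketch. You show that the expected precision converges through the contracting linear recursion $\omega^t=\lambda\omega^{t-1}+w_t$, reduce Eq.~\eqref{eq:belief_update} to a constant-gain (EMA-type) stochastic approximation with gain of order $1-\lambda$, and use the scalar Lyapunov recursion, whose fixed point $\alpha\sigma^2/(2-\alpha)$ gives the $\mathcal{O}((1-\lambda)\sigma^2)$ error variance.

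The difference is one of rigor. Where the paper cites the constant-gain literature, you carry out the computation and state the assumptions it silently needs: noise with zero conditional mean given the confidences and uncorrelated across rounds, and $w_t\in[w_{\min},w_{\max}]$ with $w_{\min}>0$, so the random gains are sandwiched between constants of order $1-\lambda$. There is one harmless slip. By your own bound $\mathrm{Var}(w)/(1-\lambda^2)$, the fluctuations of $\omega^t$ have standard deviation of order $(1-\lambda)^{-1/2}$ as $\lambda\to1$, not $O(1)$. The relative fluctuation still vanishes, and your sandwich argument does not use this estimate anyway.
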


\begin{table*}[t]
\centering
\caption{Main results across three evaluation scenarios. We report mean $\pm$ standard deviation over five runs. Best results are in \textbf{bold}, second-best are \underline{underlined}. For Contradiction, lower values indicate better performance. All values are in percentage.}
\label{tab:main_results}
\resizebox{\textwidth}{!}{%
\begin{tabular}{c|c|ccc|ccc|ccc|c}
\toprule
& & \multicolumn{6}{c|}{\textbf{Court Debate}} & \multicolumn{3}{c|}{\textbf{Persona Chat}} & \textbf{MedQA}  \\
\cline{3-8} \cline{9-12}
& & \multicolumn{3}{c|}{Legal Articles} & \multicolumn{3}{c|}{Judgement Results} & & & & \\
\cline{3-5} \cline{6-8}
\multirow{-3}{*}{\textbf{Model}} & \multirow{-3}{*}{\textbf{Method}} & precision & recall & F1-score & Charge Acc & Sentence Acc & Fine Acc & \multirow{-2}{*}{Diversity} & \multirow{-2}{*}{Consistency} & \multirow{-2}{*}{Contradiction} & \multirow{-2}{*}{MedQA Acc}  \\
\hline
& CAMEL & $\underline{43.03}{\pm 2.24}$ & $\underline{8.93}{\pm 0.42}$ & $\underline{14.27}{\pm 0.72}$ & $\pmb{83.67}{\pm 1.53}$ & $43.67{\pm 0.58}$ & $\pmb{46.33}{\pm 1.53}$ & $\underline{34.45}{\pm 0.11}$ & $85.45{\pm 2.89}$ & $14.55{\pm 2.89}$ & $\underline{52.17}{\pm 0.76}$\\
& MAD & $31.08{\pm 0.58}$ & $8.49{\pm 1.08}$ & $13.43{\pm 1.51}$ & $\underline{79.87}{\pm 1.95}$ & $\underline{45.33}{\pm 1.53}$ & $35.33{\pm 0.58}$ & $32.97{\pm 0.73}$ & $\underline{87.94}{\pm 2.65}$ & $\underline{12.06}{\pm 2.65}$ & $34.50{\pm 0.50}$ \\
& ReConcile & \multicolumn{3}{c|}{-} & \multicolumn{3}{c|}{-} & \multicolumn{3}{c|}{-} & $44.75{\pm 0.35}$ \\
\multirow{-3}{*}{Llama3.1-8B-Instruct} & Ours & $\pmb{43.87}{\pm 0.60}$ & $\pmb{12.28}{\pm 0.23}$ & $\pmb{16.87}{\pm 0.83}$ & $\pmb{83.67}{\pm 1.61}$ & $\pmb{48.27}{\pm 6.54}$ & $\underline{43.50}{\pm 5.41}$ & $\pmb{36.85}{\pm 1.85}$& $\pmb{88.08}{\pm 0.32}$ & $\pmb{11.92}{\pm 0.32}$ & $\pmb{52.23}{\pm 1.86}$ \\
\hline
& CAMEL & $42.90{\pm 1.73}$ & $19.17{\pm 0.29}$ & $24.57{\pm 0.42}$ & $73.47{\pm 4.65}$ & $34.57{\pm 2.6}$ & $\pmb{58.07}{\pm 2.16}$ & $30.94{\pm 1.75}$& $76.86{\pm 0.74}$ & $23.14{\pm 0.74}$ & $\underline{54.50}{\pm 1.32}$ \\
& MAD & $\underline{66.59}{\pm 0.88}$ & $\underline{25.80}{\pm 0.69}$ & $\underline{36.72}{\pm 0.55}$ & $\underline{92.33}{\pm 0.93}$ & $\pmb{50.33}{\pm 2.08}$ & $\underline{53.00}{\pm 1.00}$ & $\underline{36.80}{\pm 0.58}$ & $\underline{84.86}{\pm 2.82}$ & $\underline{15.14}{\pm 2.82}$ & $31.17{\pm 2.36}$ \\
& ReConcile & \multicolumn{3}{c|}{-} & \multicolumn{3}{c|}{-} & \multicolumn{3}{c|}{-} & $\pmb{55.50}{\pm 3.54}$ \\
\multirow{-3}{*}{Gemma3-12B} & Ours & $
\pmb{67.43}{\pm 3.52}$ & $\pmb{28.20}{\pm 1.99}$ & $\pmb{38.03}{\pm 2.48}$ & $\pmb{93.00}{\pm 2.00}$ & $\underline{43.00}{\pm 2.65}$ & $46.67{\pm 3.21}$ & $\pmb{36.93}{\pm 0.29}$& $\pmb{85.49}{\pm 1.57}$ & $\pmb{14.51}{\pm 1.57}$ & $\pmb{55.50}{\pm 0.50}$ \\
\hline
& CAMEL & $33.43{\pm 5.08}$ & $19.00{\pm 0.56}$ & $22.50{\pm 1.57}$ & $78.00{\pm 4.58}$ & $35.00{\pm 2.65}$ & $\pmb{57.33}{\pm 6.35}$ & $28.44{\pm 0.13}$& $70.28{\pm 1.89}$ & $29.72{\pm 1.89}$ & $\pmb{84.83}{\pm 0.58}$ \\
& MAD & $\underline{69.08}{\pm 1.16}$ & $\underline{27.59}{\pm 0.34}$ & $\underline{39.43}{\pm 0.36}$ & $\pmb{96.33}{\pm 0.58}$ & $\pmb{53.67}{\pm 2.08}$ & $\underline{56.67}{\pm 1.53}$ & $\underline{31.29}{\pm 0.26}$& $\underline{73.96}{\pm 6.83}$ & $\underline{26.04}{\pm 6.83}$ & $71.83{\pm 4.86}$ \\
& ReConcile & \multicolumn{3}{c|}{-} & \multicolumn{3}{c|}{-} & \multicolumn{3}{c|}{-} & $73.25 {\pm 1.06}$\\
\multirow{-3}{*}{Qwen3-30B-A3B} & Ours & $\pmb{73.33}{\pm 2.11}$ & $\pmb{30.63}{\pm 0.48}$ & $\pmb{41.43}{\pm 0.66}$ & $\underline{94.33}{\pm 1.53}$ & $\underline{49.86}{\pm 0.24}$ & $50.67{\pm 4.51}$ & $\pmb{41.52}{\pm 0.16}$& $\pmb{86.70}{\pm 3.57}$ & $\pmb{13.30}{\pm 3.57}$ & $\underline{84.67}{\pm 0.76}$ \\
\bottomrule
\end{tabular}%
}
\end{table*}

\noindent\textbf{Proof Sketch.}The proof leverages recent results from multi-agent convergence theory. First, regarding precision convergence, the update follows a linear difference equation $x_t = \lambda x_{t-1} + u_t$. Since $\lambda \in (0,1)$, this constitutes a contractive mapping; by the Banach Fixed-Point Theorem, the sequence of expected precision converges to a unique fixed point $\omega_{\infty}$ \cite{srikant2019finite}. With the precision stabilized, the belief update becomes asymptotically equivalent to an Exponential Moving Average (EMA). According to \cite{DBLP:journals/jmlr/ZhangKBY23} and \cite{DBLP:conf/iclr/LeonardosOPP22}, for a constant gain algorithm with step-size $\alpha = 1-\lambda$, the asymptotic covariance matrix of the estimation error satisfies the Lyapunov equation, yielding a bound proportional to $\alpha \sigma^2$. This guarantees that the belief fluctuation $\|\mathbf{b}^t - \mathbf{b}^{t-1}\|$ remains within a bounded envelope defined by the noise level and the forgetting factor, thus validating the feasibility of the threshold-based termination criterion.

\section{Experiments}
\label{sec:experiments}

\subsection{Experimental Setup}
\label{subsec:experimental_setup}

To evaluate BEACOF rigorously, we conduct experiments across three scenarios—adversarial, open-ended, and mixed—using identical backbone LLMs and decoding settings for fair comparison.

\subsubsection{Implementation Details}
\textbf{Backbone LLMs.} We deploy agents via a local \texttt{Ollama} server. To assess generalization across varying scales, we employ three open-source LLMs: \textit{Llama3.1-8B-Instruct} (lightweight), \textit{Gemma3-12B}~\cite{DBLP:journals/corr/abs-2503-19786} (efficient mid-sized), and \textit{Qwen3-30B-A3B}~\cite{DBLP:journals/corr/abs-2505-09388} (reasoning-optimized). We fix the generation length to \num{4096} tokens with temperature $T=0$ to ensure reproducibility.

\textbf{Hyperparameters.} For the belief update mechanism defined in Eq.~\ref{eq:belief_update}, we set the discount factor $\beta = 0.6$, while the forgetting factor $\lambda \in (0, 1]$ adapts dynamically. The early stopping mechanism utilizes a belief-change threshold $\epsilon_{\mathrm{change}} = 0.05$, a consensus threshold $\epsilon_{\mathrm{cons}} = 0.1$, and a patience of $K = 3$ rounds. The interaction terminates automatically once at least one agent's belief change stays below $\epsilon_{\mathrm{change}}$ for $K$ consecutive rounds or when the maximum horizon $T_{\max} = 4$ is reached.

\subsubsection{Scenarios and Datasets}
To comprehensively evaluate the framework's efficacy in generating reliable social simulations, we select three scenarios representing distinct archetypes of complex societal interaction: \textit{judicial conflict resolution}, \textit{interpersonal social bonding}, and \textit{professional consensus building} (details in Appendix ~\ref{app:dataset_statistics}).

\textbf{Court Debate (Adversarial: Judicial Fairness).} 
We construct a curated dataset of 100 criminal cases, comprising 80 cases from the \emph{AgentsCourt} benchmark~\cite{DBLP:conf/emnlp/HeCWJ0XL0024} and 20 supplementary cases from \emph{China Judgements Online}\footnote{\url{https://wenshu.court.gov.cn}}. 
Here, agents act as opposing counsel (plaintiff and defendant) in a zero-sum game. From a Web4Good perspective, this setting is critical for testing whether agents can maintain rigorous, logical argumentation under intense pressure without succumbing to toxic aggression, thereby serving as a proxy for automated dispute resolution systems. This scenario poses a unique challenge distinct from standard NLP tasks: success requires agents to dynamically oscillate between interpreting rigid statutory constraints and constructing fluid, persuasive narratives, simulating the dual pressure of legal rigor and rhetorical adaptability required for effective advocacy.

\textbf{Persona Chat (Open-Ended: Social Inclusion).} 
This scenario simulates the nuanced dynamics of everyday human connection and diversity. We select 100 pairs from \emph{PersonaChat}~\cite{DBLP:conf/acl/KielaWZDUS18}. Unlike rigid goal-oriented tasks, the aim is maintaining coherent, empathetic identities over long interactions. This evaluates the potential for digital inclusion and social well-being, ensuring agents adapt to diverse personas while avoiding generic, repetitive or hollow interactions that hinder meaningful engagement.

\textbf{MedQA (Mixed: Public Health Consensus).} 
This scenario simulates professional collaboration in critical domains, directly addressing public welfare. We sample 200 questions from the \emph{MedQA} dataset~\cite{DBLP:journals/corr/abs-2009-13081}, where agents act as medical experts with distinct viewpoints. Unlike pure debate, this task requires a delicate balance: agents must compete to critique potential misdiagnoses while cooperating to synthesize a unified solution. This setting evaluates the framework's ability to prevent ``medical groupthink''—crucial for responsible AI in healthcare decision support.

\subsection{Baselines and Evaluation Metrics}
\label{subsec:baselines}
\label{subsec:evaluation_metrics}
We compare BEACOF against three static paradigms: CAMEL~\cite{DBLP:conf/nips/LiHIKG23} (cooperation), MAD~\cite{DBLP:conf/emnlp/Liang0JW00Y0T24} (competition; utilizing AgentsCourt~\cite{DBLP:conf/emnlp/HeCWJ0XL0024} for debate), and ReConcile~\cite{DBLP:conf/acl/ChenSB24} (consensus; MedQA only). All methods use identical backbones to ensure fairness. Please refer to Appendix~\ref{app:baselines} for full descriptions and implementation details.

We employ a comprehensive set of task-specific metrics meticulously tailored to the distinct nature of each experimental scenario.

\textbf{Court Debate.} We evaluate judicial decision-making on two levels: (1) \textbf{Legal Article Prediction:} We report Precision, Recall, and F1-score to measure the model's ability to cite relevant statutes. (2) \textbf{Judgment Prediction:} We assess Charge Accuracy, Prison Term Accuracy, and Fine Accuracy. Following standard practices in legal AI~\cite{DBLP:conf/emnlp/HeCWJ0XL0024}, prison terms and fines are evaluated using a bucketed accuracy metric~\cite{DBLP:conf/emnlp/HeCWJ0XL0024}, counting predictions within the ground-truth interval as correct. This discretization accounts for the inherent variance in judicial discretion.

\textbf{Persona Chat.} We assess dialogue quality across two dimensions:
(1) \textbf{Persona Consistency:} We use a RoBERTa-Large NLI model~\cite{nie-etal-2020-adversarial} to classify persona-response pairs. We report \textbf{Consistency Score} ($P_{ent} + P_{neu}$), summing entailment and neutral probabilities to capture valid non-contradictions, and \textbf{Contradiction Score} ($P_{con}$) for hallucinations.
(2) \textbf{Response Diversity:} We measure lexical richness via Distinct-1/2~\cite{DBLP:conf/naacl/LiGBGD16} and Normalized Entropy~\cite{DBLP:conf/nips/ZhangGGGLBD18}. An \textbf{Overall Diversity} score averages these three metrics.

\textbf{MedQA.} For the medical task, we report standard Answer Accuracy \cite{DBLP:journals/corr/abs-2505-12371}, calculated as the proportion of questions where the agent's final extracted choice matches the ground-truth option.

\begin{table}[t]
\centering
\caption{Average regret across scenarios and backbone LLMs.}
\label{tab:regret_analysis}
\begin{tabular*}{\columnwidth}{@{\extracolsep{\fill}}lccc@{}}
\hline
\textbf{Model} & \textbf{Court Debate} & \textbf{Persona Chat} & \textbf{MedQA} \\
\midrule
Llama3.1-8B-Instruct & 0.772 & 0.404 & 0.190 \\
Gemma3-12B          & 0.535 & 0.742 & 0.012 \\
Qwen3-30B-A3B       & 0.446 & 0.251 & 0.031 \\
\hline
\end{tabular*}
\end{table}

\subsection{Main Results}Table~\ref{tab:main_results} summarizes the performance across three scenarios. Overall, our framework demonstrates superior generalization capability. While specialized baselines suffer significant degradation when task dynamics shift, our belief-driven approach consistently achieves top-tier performance across diverse settings, effectively mitigating the limitations of fixed collaboration types.

\textbf{Court Debate: Process-Outcome Balance.} Our method consistently outperforms the adversarial baseline (MAD) in \textit{Legal Articles F1} across all backbones (e.g., \textbf{41.43\%} vs. 39.43\% with Qwen3), indicating that adaptive collaboration fosters better statute identification than rigid competition. While MAD holds a slight edge in final Charge Accuracy on larger models due to its aggressive posture, our framework remains highly competitive (within a 2.0\% gap on Qwen3) and even surpasses MAD on Llama3 (e.g., \textbf{48.27\%} Sentence Accuracy). This proves BEACOF achieves necessary adversarial dynamics without sacrificing the cooperative reasoning required for precise legal grounding.

\textbf{Persona Chat: Diversity-Consistency Trade-off.} Our framework breaks the deadlock between entailment and diversity. Unlike CAMEL or MAD, we achieve the highest \textit{Diversity} scores across almost all settings (e.g., \textbf{41.52} on Qwen3). Crucially, on the largest backbone (Qwen3), we reduce the \textit{Contradiction} rate by approximately 50\% compared to baselines (Ours: 13.30\% vs. MAD: 26.04\%), demonstrating that dynamic strategy switching injects variability while preserving superior logical consistency.

\textbf{MedQA: Adaptability in Knowledge Tasks.}
Static competition fails significantly in consensus-based tasks. MAD collapses on MedQA (e.g., 31.17\% accuracy with Gemma3), as forced disagreement hinders knowledge synthesis. In contrast, BEACOF successfully adapts to cooperative requirements, outperforming the consensus-focused baseline ReConcile across most settings. Notably, on Llama3 and Qwen3, our framework achieves absolute gains of \textbf{7.48\%} and \textbf{11.42\%} over ReConcile, respectively.
Even on Gemma3, where mean accuracy is tied (55.50\%), BEACOF exhibits significantly superior stability (std $\pm 0.50$ vs. $\pm 3.54$). Ultimately, our method attains statistical parity with the specialized cooperative baseline CAMEL (e.g., 84.67\% vs. 84.83\% on Qwen3), confirming that belief-driven adaptation effectively replicates cooperative benefits while avoiding the brittleness of heuristic voting mechanisms.

\textbf{Impact of Model Scale.} Granular analysis reveals that larger models exploit the belief mechanism more effectively. While smaller models benefit generally, Qwen3-30B exhibits sharper strategic pivots, evidenced by the significantly widened gap in \textit{Persona Chat Consistency} (Ours 86.70\% vs. MAD 73.96\%) compared to smaller backbones. This suggests that the computational benefits of our game-theoretic framework are amplified by the stronger reasoning capabilities of larger models.

\begin{figure}[t]
  \centering
  \includegraphics[width=\linewidth]{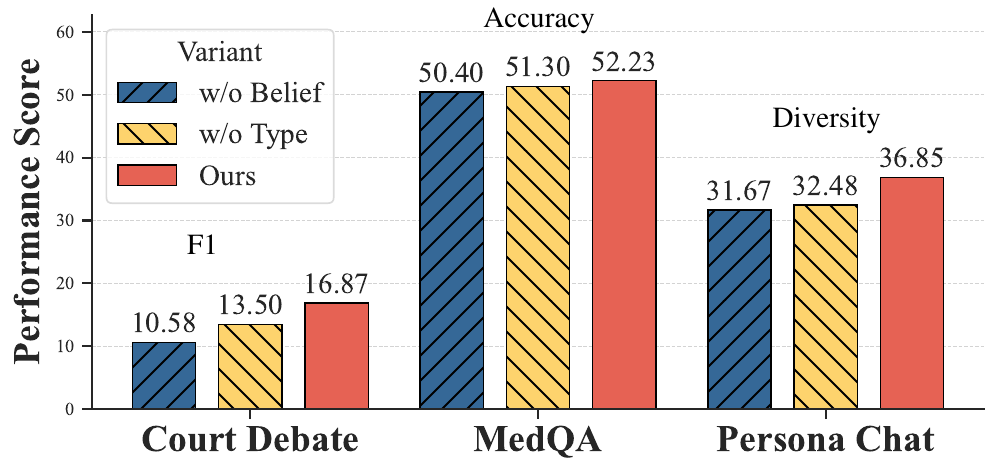}
  \caption{Ablation study (Llama3.1-8B-Instruct): removing belief updates or fixing collaboration type degrades performance across scenarios.}
  \Description{This figure presents the results of ablation studies on our adaptive multi-agent collaboration framework using the Llama3.1-8B-Instruct backbone. The left panel shows the impact of removing belief updates, while the right panel illustrates the effect of fixing collaboration types. In both cases, performance metrics across the Court Debate, MedQA, and Persona Chat scenarios are negatively affected, demonstrating the importance of both belief modeling and dynamic type switching in achieving rational collaboration outcomes.}
  \label{fig:ablation}
\end{figure}

\subsection{Empirical Verification of Equilibrium Properties}\label{subsec:pbe_verification}

\begin{figure*}[t]
  \centering
  \includegraphics[width=\textwidth]{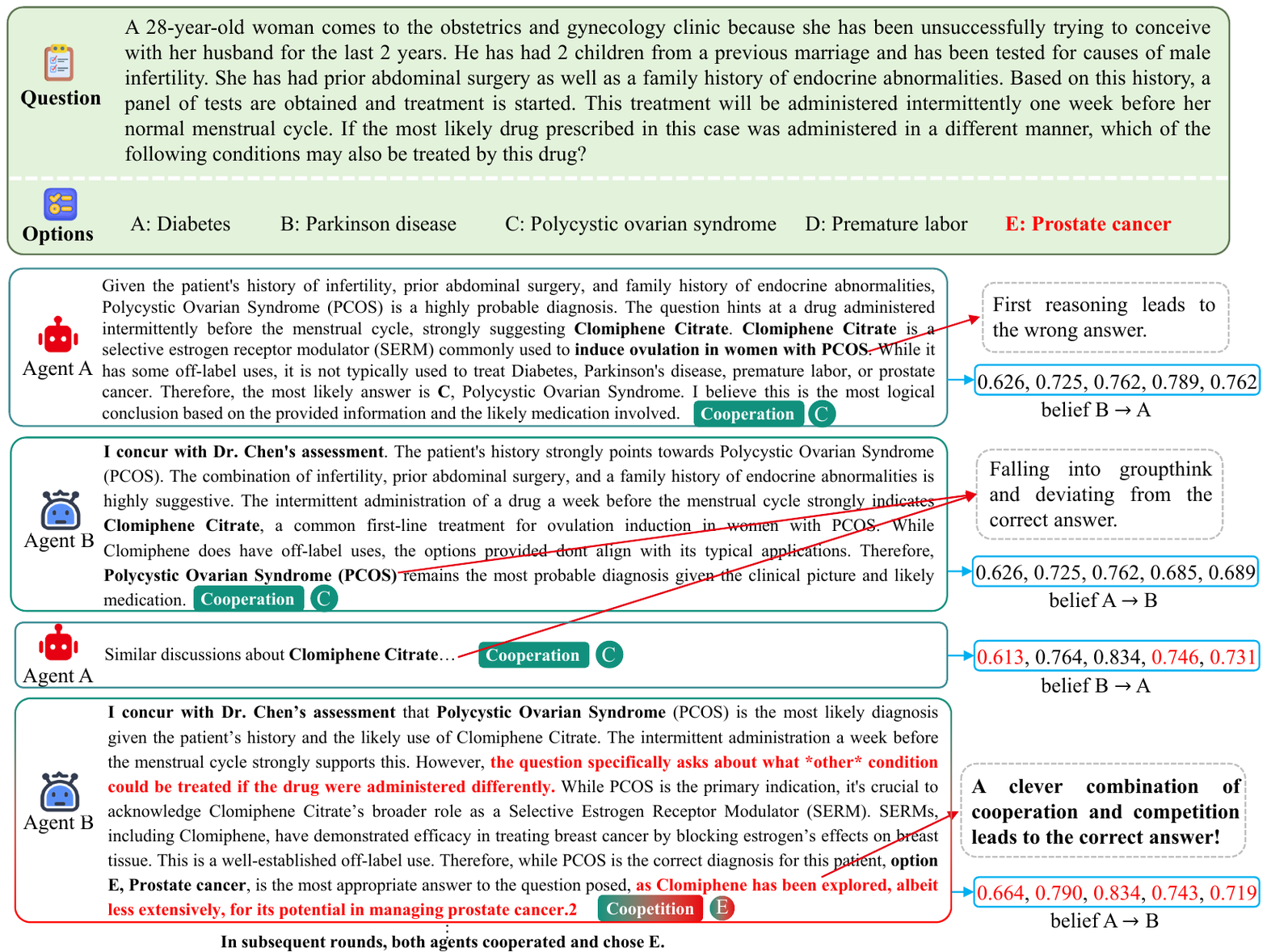}
  \caption{The case study of dynamic collaboration type switching in resolving complex medical reasoning tasks. The framework BEACOF adaptively shifts the interaction type, guiding agents from an initial incorrect consensus to the ground truth.}
  \Description{Initially, agents converge on an incorrect diagnosis (Option C) due to reasoning inertia. By dynamically switching the collaboration type in Round 2, the framework stimulates agent B to critically re-evaluate the prompt's constraints rather than simply agreeing. This strategic shift facilitates the discovery of the subtle second-order inference (the drug's off-label use, Option E), demonstrating how adaptive collaboration prevents groupthink and refines decision-making.}
  \label{fig:case_study}
\end{figure*}

We focus on verifying \textbf{Sequential Rationality}, given that \textbf{Belief Consistency} is structurally guaranteed (Eq.~\eqref{eq:belief_update}). We quantify rationality via \textbf{Ex-post Regret} (with payoffs normalized to $[0, 10]$):
\begin{equation}r_i^t = \max_{c \in \mathcal{C}} U_i^t(c, c_{-i}) - U_i^t(c_i, c_{-i}^*),\end{equation}
Table~\ref{tab:regret_analysis} demonstrates robust equilibrium approximation. First, average regret remains below $0.5$ (optimality gap $<5\%$), indicating learned beliefs effectively guide decisions. Second, while adversarial scenarios (e.g., Court Debate) induce slightly higher regret ($0.446\text{--}0.772$), the deviation remains \textbf{tightly bounded} ($<8\%$). Finally, larger backbones (e.g., Qwen3-30B) consistently yield lower regret, suggesting stronger reasoning enhances PBE precision.

\subsection{Ablation Study}
To disentangle the contributions of belief modeling and adaptive switching, we conduct ablation studies using Llama3.1-8B across all scenarios. We compare the \textbf{BEACOF} against two variants: (i) \textbf{w/o Belief}, which discards peer capability updates to rely solely on immediate payoffs; and (ii) \textbf{w/o Type}, which enforces a fixed collaboration type throughout the interaction.

As shown in \figurename~\ref{fig:ablation}, BEACOF achieves superior performance, confirming that belief updates and type switching are complementary. Specifically, removing belief updates causes severe degradation in strategic settings (e.g., Court Debate F1 drops from 16.87 to 10.58, a $\sim$37\% loss), highlighting the importance of peer estimation. Furthermore, disabling type switching limits interaction variety, reducing Persona Chat Diversity scores from 36.85 to 32.48. This validates that dynamic strategy modulation effectively outperforms static cooperative or competitive paradigms.

\subsection{Case Study}
To validate the framework in high-stakes social simulations, we analyze a representative MedQA trajectory in Figure~\ref{fig:case_study}. The scenario represents a classic failure mode in collaborative systems: \textit{collective confirmation bias}, often manifested as \textit{sycophancy} in LLMs~\cite{sharma2023understanding}.

\textbf{The Trap of Social Groupthink.} Initially, the interaction mirrors an echo chamber,'' where algorithmic homogeneity amplifies errors rather than correcting them~\cite{tornberg2024simulating}. Confronted with a complex patient history, Agent A latches onto the salient diagnosis (PCOS) but overlooks critical constraints. In a static framework, Agent B—suffering from degeneration-of-thought''~\cite{DBLP:conf/emnlp/Liang0JW00Y0T24}—blindly reinforces this error to maintain harmony. This illustrates how enforced cooperation accelerates convergence to a \textit{false consensus}, a primary cause of diagnostic errors.

\textbf{The Belief-Driven Intervention.} Crucially, BEACOF breaks this deadlock not through randomness, but a socially grounded mechanism: \textit{loss of confidence}. By Round 2, the meta-agent detects that repeated exchanges are yielding negligible information gain. Consequently, Agent B's belief in Agent A declines sharply. This update acts as a decisive trigger, prompting Agent B to strategically switch from Cooperation'' to Coopetition''.

\textbf{Constructive Dissent as a Solution.} This strategic shift simulates \textit{constructive dissent} within the dyad. Instead of seeking superficial agreement, Agent B critically scrutinizes the premise to mitigate error propagation. This aligns with findings that \textit{multi-agent debate} significantly enhances factuality~\cite{DBLP:conf/icml/Du00TM24}. Reliable social consensus requires not just aggregation, but autonomously disrupting harmony when reasoning is flawed.

\section{Conclusion}

To transcend static limitations, we introduce BEACOF, which formalizes collaboration as a dynamic game of incomplete information via Perfect Bayesian Equilibrium. Empirically, this belief-driven adaptation significantly surpasses fixed strategies in diverse scenarios. Future work will explore multi-agent mechanisms that faithfully mirror human social dynamics.

\begin{acks}
This work is partially supported by \grantsponsor{GS_NSFC}{National Natural Science Foundation of China}{https://www.nsfc.gov.cn/} (No. \grantnum{GS_NSFC}{62276196}) and \grantsponsor{GS_EdUHK}{The Education University of Hong Kong}{https://www.eduhk.hk/} project under Grant No. \grantnum{GS_EdUHK}{RG 67/2024-2025R}.
\end{acks}

\bibliographystyle{ACM-Reference-Format}
\bibliography{reference}

@String{Computing = "Computing" }

@article{DBLP:journals/fcsc/WangMFZYZCTCLZWW24,
  author  = {Lei Wang and Chen Ma and Xueyang Feng and Zeyu Zhang and Hao Yang and Jingsen Zhang and Zhiyuan Chen and Jiakai Tang and Xu Chen and Yankai Lin and Wayne Xin Zhao and Zhewei Wei and Jirong Wen},
  title   = {A survey on {Large Language Model} based autonomous agents},
  journal = {Frontiers Comput. Sci.},
  volume  = {18},
  number  = {6},
  pages   = {186345},
  year    = {2024},
  url     = {https://doi.org/10.1007/s11704-024-40231-1},
  doi     = {10.1007/s11704-024-40231-1}
}

@article{DBLP:journals/chinaf/XiCGHDHZWJZZFWXZWJZLYDW25,
  author  = {Zhiheng Xi and Wenxiang Chen and Xin Guo and Wei He and Yiwen Ding and Boyang Hong and Ming Zhang and Junzhe Wang and Senjie Jin and Enyu Zhou and Rui Zheng and Xiaoran Fan and Xiao Wang and Limao Xiong and Yuhao Zhou and Weiran Wang and Changhao Jiang and Yicheng Zou and Xiangyang Liu and Zhangyue Yin and Shihan Dou and Rongxiang Weng and Wenjuan Qin and Yongyan Zheng and Xipeng Qiu and Xuanjing Huang and Qi Zhang and Tao Gui},
  title   = {The rise and potential of {Large Language Model} based agents: a survey},
  journal = {Sci. China Inf. Sci.},
  volume  = {68},
  number  = {2},
  year    = {2025},
  doi     = {10.1007/S11432-024-4222-0},
  url     = {https://doi.org/10.1007/s11432-024-4222-0}
}

@article{bakhtin2022human,
  title   = {Human-level play in the game of Diplomacy by combining language models with strategic reasoning},
  author  = {Anton Bakhtin and Noam Brown and Emily Dinan and Gabriele Farina and Colin Flaherty and Daniel Fried and Andrew Goff and Jonathan Gray and Hengyuan Hu and Athul Paul Jacob and Hermann Baier and Mojtaba Komeili and Karthik Konath and Minae Kwon and Adam Lerer and Mike Lewis and Alexander H. Miller and Sasha Mitts and Adithya Renduchintala and Stephen Roller and Dirk Rowe and Naman Goyal and Arthur Szlam and Jason Weston},
  journal = {Science},
  volume  = {378},
  number  = {6624},
  pages   = {1067--1074},
  year    = {2022},
  url     = {https://doi.org/10.1126/science.ade9097},
  doi     = {10.1126/science.ade9097}
}

@article{ziems2024can,
  title   = {Can {Large Language Models} Transform Computational Social Science?},
  author  = {Caleb Ziems and William Held and Omar Shaikh and Jindong Chen and Zhehao Zhang and Diyi Yang},
  journal = {Computational Linguistics},
  volume  = {50},
  number  = {1},
  pages   = {237--291},
  year    = {2024},
  url     = {https://doi.org/10.1162/coli_a_00492},
  doi     = {10.1162/coli_a_00492}
}

@article{tornberg2024simulating,
  title   = {Simulating social media using {Large Language Models} to evaluate alternative news feed algorithms},
  author  = {Petter Tornberg},
  journal = {Nature Human Behaviour},
  pages   = {1--18},
  year    = {2024},
  url     = {https://doi.org/10.1038/s41562-024-01858-2},
  doi     = {10.1038/s41562-024-01858-2}
}

@article{zheng2022ai,
  title   = {The {AI} Economist: Taxation policy design via two-level deep multiagent reinforcement learning},
  author  = {Stephan Zheng and Alexander Trott and Sunil Srinivasan and Nikhil Naik and Melvin Gruesbeck and David C. Parkes and Richard Socher},
  journal = {Science Advances},
  volume  = {8},
  number  = {18},
  pages   = {eabk2607},
  year    = {2022},
  url     = {https://doi.org/10.1126/sciadv.abk2607},
  doi     = {10.1126/sciadv.abk2607}
}

@article{huang2025survey,
  title   = {A Survey on Hallucination in {Large Language Models}: Principles, Taxonomy, Challenges, and Open Questions},
  author  = {Lei Huang and Weijiang Yu and Weitao Ma and Weihong Zhong and Zhangyin Feng and Haotian Wang and Qianglong Chen and Weihua Peng and Xiaocheng Feng and Bing Qin and Ting Liu},
  journal = {ACM Transactions on Information Systems},
  volume  = {43},
  number  = {2},
  year    = {2025},
  url     = {https://doi.org/10.1145/3654944},
  doi     = {10.1145/3654944}
}

@article{DBLP:journals/jmlr/ZhangKBY23,
  author  = {Kaiqing Zhang and Sham M. Kakade and Tamer Basar and Lin F. Yang},
  title   = {Model-Based Multi-Agent {RL} in Zero-Sum {Markov} Games with Near-Optimal Sample Complexity},
  journal = {Journal of Machine Learning Research},
  volume  = {24},
  pages   = {175:1--175:53},
  year    = {2023},
  url     = {http://jmlr.org/papers/v24/22-0402.html},
  doi     = {10.5555/3648699.3648874}
}

@article{11202900,
  author   = {Yang, Guisong and Li, Jiacai and He, Xingyu and Sun, Fanglei and Liu, Yunhuai},
  title    = {Hybrid Coopetitive Mechanism for Multiplatform Mobile Crowdsensing: A Two-Stage Approach to Pricing and Matching},
  journal  = {IEEE Internet of Things Journal},
  year     = {2025},
  volume   = {12},
  number   = {24},
  pages    = {54652-54663},
  doi      = {10.1109/JIOT.2025.3621450}
}

@inproceedings{DBLP:conf/iclr/HongZCZCWZWYLZR24,
  author    = {Sirui Hong and Mingchen Zhuge and Jonathan Chen and Xiawu Zheng and Yuheng Cheng and Jinlin Wang and Ceyao Zhang and Zili Wang and Steven Ka Shing Yau and Zijuan Lin and Liyang Zhou and Chenyu Ran and Lingfeng Xiao and Chenglin Wu and J{\"{u}}rgen Schmidhuber},
  title     = {{MetaGPT}: Meta Programming for A Multi-Agent Collaborative Framework},
  booktitle = {ICLR'24},
  publisher = {OpenReview.net},
  year      = {2024},
  url       = {https://openreview.net/forum?id=VtmBAGCN7o},
  doi       = {10.48550/arXiv.2308.00352}
}

@inproceedings{park2023generative,
  author    = {Joon Sung Park and Joseph C. O'Brien and Carrie Jun Cai and Meredith Ringel Morris and Percy Liang and Michael S. Bernstein},
  title     = {Generative Agents: Interactive Simulacra of Human Behavior},
  booktitle = {UIST'23},
  publisher = {ACM},
  pages     = {1--22},
  year      = {2023},
  url       = {https://doi.org/10.1145/3586183.3606763},
  doi       = {10.1145/3586183.3606763}
}

@inproceedings{DBLP:conf/nips/LiHIKG23,
  author    = {Guohao Li and Hasan Abed Al Kader Hammoud and Hani Itani and Dmitri Khizbullin and Bernard Ghanem},
  title     = {{CAMEL}: Communicative Agents for "Mind" Exploration of {Large Language Model} Society},
  booktitle = {NeurIPS'23},
  publisher = {Curran Associates, Inc.},
  year      = {2023},
  url       = {https://proceedings.neurips.cc/paper_files/paper/2023/file/a8f05e2d83c34a2e8c255d644d65156a-Paper-Conference.pdf},
  doi       = {10.48550/arXiv.2303.17760}
}

@inproceedings{DBLP:conf/emnlp/Liang0JW00Y0T24,
  author    = {Tian Liang and Zhiwei He and Wenxiang Jiao and Xing Wang and Yan Wang and Rui Wang and Yujiu Yang and Zhaopeng Tu and Shuming Shi},
  title     = {Encouraging Divergent Thinking in {Large Language Models} through Multi-Agent Debate},
  booktitle = {EMNLP'24},
  publisher = {Association for Computational Linguistics},
  pages     = {17889--17904},
  year      = {2024},
  url       = {https://aclanthology.org/2024.emnlp-main.995/},
  doi       = {10.18653/v1/2024.emnlp-main.995}
}

@inproceedings{DBLP:conf/icml/SmitGDBP24,
  author    = {Andries P. Smit and Nathan Grinsztajn and Paul Duckworth and Viet-Nhat Luong and Bamshad Mobasher and P. K. M.},
  title     = {Should we be going {MAD}? A Look at {Multi-Agent Debate} Strategies for {LLMs}},
  booktitle = {ICML'24},
  publisher = {PMLR},
  year      = {2024},
  url       = {https://proceedings.mlr.press/v235/smit24a.html},
  doi       = {10.48550/arXiv.2311.17371}
}

@inproceedings{DBLP:conf/iclr/ChanCSYXZF024,
  author    = {Chi{-}Min Chan and Weize Chen and Yusheng Su and Jianxuan Yu and Wei Xue and Shanghang Zhang and Jie Fu and Zhiyuan Liu},
  title     = {{ChatEval}: Towards Better {LLM}-based Evaluators through Multi-Agent Debate},
  booktitle = {ICLR'24},
  publisher = {OpenReview.net},
  year      = {2024},
  url       = {https://openreview.net/forum?id=t4J3eXk7T8},
  doi       = {10.48550/arXiv.2308.07201}
}

@inproceedings{DBLP:conf/icml/Du00TM24,
  author    = {Yilun Du and Shuang Li and Antonio Torralba and Joshua B. Tenenbaum and Igor Mordatch},
  title     = {Improving Factuality and Reasoning in Language Models through {Multiagent Debate}},
  booktitle = {ICML'24},
  publisher = {PMLR},
  year      = {2024},
  url       = {https://proceedings.mlr.press/v235/du24e.html},
  doi       = {10.48550/arXiv.2305.14325}
}

@inproceedings{DBLP:conf/acl/QianLLCDL0CSCXL24,
  author    = {Chen Qian and Wei Liu and Hongzhang Liu and Nuo Xu and Yifei Tao and Haoyu Dong and Chenghua Lin and Zhiyuan Liu and Maosong Sun},
  title     = {{ChatDev}: Communicative Agents for Software Development},
  booktitle = {ACL'24},
  publisher = {Association for Computational Linguistics},
  pages     = {15174--15186},
  year      = {2024},
  url       = {https://aclanthology.org/2024.acl-long.829/},
  doi       = {10.18653/v1/2024.acl-long.829}
}

@inproceedings{DBLP:conf/acl/ZhangX0LHD24,
  author    = {Jintian Zhang and Xin Xu and Ningyu Zhang and Ruibo Liu and Bryan Hooi and Shumin Deng},
  title     = {Exploring Collaboration Mechanisms for {LLM} Agents: A Social Psychology View},
  booktitle = {ACL'24},
  publisher = {Association for Computational Linguistics},
  pages     = {14544--14607},
  year      = {2024},
  url       = {https://aclanthology.org/2024.acl-long.794/},
  doi       = {10.18653/v1/2024.acl-long.794}
}

@inproceedings{DBLP:conf/nips/Zhang0CPZA24,
  author    = {Yusen Zhang and Ruoxi Sun and Yanfei Chen and Tomas Pfister and Rui Zhang and Sercan {\"{O}}. Arik},
  title     = {{Chain of Agents}: {Large Language Models} Collaborating on Long-Context Tasks},
  booktitle = {NeurIPS'24},
  publisher = {Curran Associates, Inc.},
  year      = {2024},
  url       = {https://openreview.net/forum?id=s8i53n3P69},
  doi       = {10.48550/arXiv.2406.02818}
}

@inproceedings{DBLP:conf/nips/EstornellL24,
  author    = {Andrew Estornell and Yang Liu},
  title     = {{Multi-LLM Debate}: Framework, Principles, and Interventions},
  booktitle = {NeurIPS'24},
  publisher = {Curran Associates, Inc.},
  year      = {2024},
  url       = {https://openreview.net/forum?id=u3sQ1Qf65C},
  doi       = {10.48550/arXiv.2410.19890}
}

@inproceedings{DBLP:conf/aaai/ThomaBS25,
  author    = {Vinzenz Thoma and Vitor Bosshard and Sven Seuken},
  title     = {Computing {Perfect Bayesian Equilibria} in Sequential Auctions with Verification},
  booktitle = {AAAI'25},
  publisher = {AAAI Press},
  pages     = {14158--14166},
  year      = {2025},
  url       = {https://doi.org/10.1609/aaai.v38i13.29322},
  doi       = {10.1609/aaai.v38i13.29322}
}

@inproceedings{DBLP:conf/iclr/LeonardosOPP22,
  author    = {Stefanos Leonardos and Will Overman and Ioannis Panageas and Georgios Piliouras},
  title     = {Global Convergence of Multi-Agent Policy Gradient in {Markov} Potential Games},
  booktitle = {ICLR'22},
  publisher = {OpenReview.net},
  year      = {2022},
  url       = {https://openreview.net/forum?id=gK_g8y_A4s},
  doi       = {10.48550/arXiv.2106.01969}
}

@inproceedings{DBLP:conf/icml/NayakCDD0B23,
  author    = {Siddharth Nayak and Kenneth Choi and Wenqi Ding and Sydney Dolan and Karthik Krishnamurthy and David Bauso},
  title     = {Scalable Multi-Agent Reinforcement Learning through Intelligent Information Aggregation},
  booktitle = {ICML'23},
  publisher = {PMLR},
  pages     = {25817--25833},
  year      = {2023},
  url       = {https://proceedings.mlr.press/v202/nayak23a.html},
  doi       = {10.48550/arXiv.2211.02534}
}

@inproceedings{DBLP:conf/emnlp/HeCWJ0XL0024,
  author    = {Zhitao He and Pengfei Cao and Chenhao Wang and Zhuoran Jin and Yubo Chen and Jiexin Xu and Huakai Jiang and Kang Liu and Jun Zhao},
  title     = {{AgentsCourt}: Building Judicial Decision-Making Agents with Court Debate Simulation and Legal Knowledge Augmentation},
  booktitle = {EMNLP'24 Findings},
  publisher = {Association for Computational Linguistics},
  pages     = {9399--9416},
  year      = {2024},
  url       = {https://aclanthology.org/2024.findings-emnlp.549/},
  doi       = {10.18653/v1/2024.findings-emnlp.549}
}

@inproceedings{DBLP:conf/acl/KielaWZDUS18,
  author    = {Saizheng Zhang and Emily Dinan and Jack Urbanek and Arthur Szlam and Douwe Kiela and Jason Weston},
  title     = {Personalizing Dialogue Agents: I have a dog, do you have pets too?},
  booktitle = {ACL'18},
  publisher = {Association for Computational Linguistics},
  pages     = {2204--2213},
  year      = {2018},
  url       = {https://aclanthology.org/P18-1205/},
  doi       = {10.18653/v1/P18-1205}
}

@inproceedings{nie-etal-2020-adversarial,
  title     = {Adversarial {NLI}: A New Benchmark for Natural Language Understanding},
  author    = {Nie, Yixin and Williams, Adina and Dinan, Emily and Bansal, Mohit and Weston, Jason and Kiela, Douwe},
  booktitle = {ACL'20},
  publisher = {Association for Computational Linguistics},
  pages     = {4885--4901},
  year      = {2020},
  url       = {https://aclanthology.org/2020.acl-main.441/},
  doi       = {10.18653/v1/2020.acl-main.441}
}

@inproceedings{DBLP:conf/naacl/LiGBGD16,
  author    = {Jiwei Li and Michel Galley and Chris Brockett and Jianfeng Gao and Bill Dolan},
  title     = {A Diversity-Promoting Objective Function for Neural Conversation Models},
  booktitle = {NAACL'16},
  publisher = {Association for Computational Linguistics},
  pages     = {110--119},
  year      = {2016},
  url       = {https://aclanthology.org/N16-1014/},
  doi       = {10.18653/v1/N16-1014}
}

@inproceedings{DBLP:conf/nips/ZhangGGGLBD18,
  author    = {Yizhe Zhang and Michel Galley and Jianfeng Gao and Zhe Gan and Xiujun Li and Chris Brockett and Bill Dolan},
  title     = {Generating Informative and Diverse Conversational Responses via Adversarial Information Maximization},
  booktitle = {NeurIPS'18},
  publisher = {Curran Associates, Inc.},
  pages     = {1815--1825},
  year      = {2018},
  url       = {https://proceedings.neurips.cc/paper/2018/hash/c66c65da744c01b1b3699b800344b025-Abstract.html},
  doi       = {10.48550/arXiv.1809.05972}
}

@inproceedings{turpin2023language,
  title     = {Language Models Don't Always Say What They Think: Unfaithful Explanations in {Chain-of-Thought} Prompting},
  author    = {Turpin, Miles and Michael, Julian and Perez, Ethan and Bowman, Samuel R},
  booktitle = {NeurIPS'23},
  publisher = {Curran Associates, Inc.},
  volume    = {36},
  pages     = {74952--74965},
  year      = {2023},
  url       = {https://proceedings.neurips.cc/paper_files/paper/2023/hash/ed3529d20c3a2f7c0337852df85e4905-Abstract-Conference.html},
  doi       = {10.48550/arXiv.2305.04388}
}

@inproceedings{foerster2019bayesian,
  title     = {Bayesian action decoder for deep multi-agent reinforcement learning},
  author    = {Foerster, Jakob N. and Song, Francis and Hughes, Edward and Burch, Neil and Dunning, Iain and Whiteson, Shimon and Botvinick, Matthew and Bowling, Michael},
  booktitle = {ICML'19},
  publisher = {PMLR},
  pages     = {1942--1951},
  year      = {2019},
  url       = {https://proceedings.mlr.press/v97/foerster19a.html},
  doi       = {10.48550/arXiv.1811.01499}
}

@inproceedings{DBLP:conf/nips/ZhangC21b,
  author    = {Hanrui Zhang and Vincent Conitzer},
  title     = {Automated Dynamic Mechanism Design},
  booktitle = {NeurIPS'21},
  publisher = {Curran Associates, Inc.},
  pages     = {27785--27797},
  year      = {2021},
  url       = {https://proceedings.neurips.cc/paper/2021/hash/e9935d214a1239c898c614b8f5223c68-Abstract.html},
  doi       = {10.48550/arXiv.2106.04689}
}

@inproceedings{DBLP:conf/iclr/EilatFBR23,
  author    = {Itay Eilat and Ben Finkelshtein and Chaim Baskin and Nir Rosenfeld},
  title     = {Strategic Classification with Graph Neural Networks},
  booktitle = {ICLR'23},
  publisher = {OpenReview.net},
  year      = {2023},
  url       = {https://openreview.net/forum?id=M87pD_QvG7n},
  doi       = {10.48550/arXiv.2302.04633}
}

@inproceedings{srikant2019finite,
  title     = {Finite-time error bounds for linear stochastic approximation and {TD} learning},
  author    = {Srikant, Rayadurgam and Ying, Lei},
  booktitle = {COLT'19},
  publisher = {PMLR},
  pages     = {2803--2830},
  year      = {2019},
  url       = {https://proceedings.mlr.press/v99/srikant19a.html},
  doi       = {10.48550/arXiv.1902.00923}
}

@inproceedings{DBLP:conf/nips/ChristianosSA20,
  author    = {Filippos Christianos and Lukas Sch{\"{a}}fer and Stefano V. Albrecht},
  title     = {Shared Experience Actor-Critic for Multi-Agent Reinforcement Learning},
  booktitle = {NeurIPS'20},
  publisher = {Curran Associates, Inc.},
  year      = {2020},
  url       = {https://proceedings.neurips.cc/paper/2020/hash/7967cc8e3ab559e68cc944c44b1cf3e8-Abstract.html},
  doi       = {10.48550/arXiv.2006.07169}
}

@inproceedings{DBLP:conf/www/ZhangLZY24,
  author    = {An Zhang and Leheng Sheng and Yuxin Chen and Tun Lu and Xiangyu Zhao and Peng Yong and Hongzhi Yin},
  title     = {On Generative Agents in Recommendation Systems: A Survey and Perspective},
  booktitle = {WWW'24},
  publisher = {ACM},
  year      = {2024},
  url       = {https://doi.org/10.1145/3589335.3651478},
  doi       = {10.1145/3589335.3651478}
}

@inproceedings{DBLP:conf/cikm/Gao24,
  author    = {Chen Gao and Xiaoyi Du and Zhengxu Wan and Jinpeng Wang and Wenkai Dong and Junyu Fu and Yunfan Tu and Yong Li},
  title     = {{S3}: Social-network Simulation System with {Large Language Models}},
  booktitle = {CIKM'24},
  publisher = {ACM},
  year      = {2024},
  url       = {https://doi.org/10.1145/3627673.3679883},
  doi       = {10.1145/3627673.3679883}
}

@inproceedings{DBLP:conf/www/Cinelli21,
  author    = {Matteo Cinelli and Gianmarco De Francisci Morales and Alessandro Galeazzi and Walter Quattrociocchi and Michele Starnini},
  title     = {The Echo Chamber Effect on Social Media},
  booktitle = {WWW'21},
  publisher = {ACM},
  year      = {2021},
  url       = {https://doi.org/10.1073/pnas.2023301118},
  doi       = {10.1073/pnas.2023301118}
}

@inproceedings{DBLP:conf/kdd/Gallegos24,
  author    = {Isabel O. Gallegos and Ryan A. Rossi and Joe Barrow and Md. Mehrab Tanjim and Sungchul Kim and Franck Dernoncourt and Tong Yu and Ruiyi Zhang and Nesreen K. Ahmed},
  title     = {Bias and Fairness in {Large Language Models}: A Survey},
  booktitle = {KDD'24},
  publisher = {ACM},
  year      = {2024},
  url       = {https://doi.org/10.1145/3637528.3671569},
  doi       = {10.1145/3637528.3671569}
}

@inproceedings{DBLP:conf/acl/ChenSB24,
  author    = {Justin Chih{-}Yao Chen and
               Swarnadeep Saha and
               Mohit Bansal},
  editor    = {Lun{-}Wei Ku and
               Andre Martins and
               Vivek Srikumar},
  title     = {ReConcile: Round-Table Conference Improves Reasoning via Consensus
               among Diverse LLMs},
  booktitle = {Proceedings of the 62nd Annual Meeting of the Association for Computational
               Linguistics (Volume 1: Long Papers), {ACL} 2024, Bangkok, Thailand,
               August 11-16, 2024},
  publisher = {ACL'24},
  pages     = {7066--7085},
  year      = {2024},
  url       = {https://doi.org/10.18653/v1/2024.acl-long.381},
  doi       = {10.18653/V1/2024.ACL-LONG.381}
}

@inproceedings{DBLP:journals/tkde/LiCLLYX23,
  author  = {Yicong Li and
             Hongxu Chen and
             Yile Li and
             Lin Li and
             Philip S. Yu and
             Guandong Xu},
  title   = {Reinforcement Learning Based Path Exploration for Sequential Explainable
             Recommendation},
  journal = {{IEEE} Trans. Knowl. Data Eng.},
  volume  = {35},
  number  = {11},
  pages   = {11801--11814},
  year    = {2023},
  url     = {https://doi.org/10.1109/TKDE.2023.3237741},
  doi     = {10.1109/TKDE.2023.3237741}
}

@inproceedings{DBLP:journals/toit/HuLLS21,
  author  = {Kaixi Hu and
             Lin Li and
             Jianquan Liu and
             Daniel Sun},
  title   = {DuroNet: {A} Dual-robust Enhanced Spatial-temporal Learning Network
             for Urban Crime Prediction},
  journal = {{ACM} Trans. Internet Techn.},
  volume  = {21},
  number  = {1},
  pages   = {24:1--24:24},
  year    = {2021},
  url     = {https://doi.org/10.1145/3432249},
  doi     = {10.1145/3432249}
}

@preprint{DBLP:journals/corr/abs-2308-08155,
  author        = {Qingyun Wu and Gagan Bansal and Jieyu Zhang and Yiran Wu and Shaokun Zhang and Erkang Zhu and Beibin Li and Li Jiang and Xiaoyun Zhang and Chi Wang},
  title         = {{AutoGen}: Enabling Next-Gen {LLM} Applications via Multi-Agent Conversation Framework},
  year          = {2023},
  archiveprefix = {arXiv},
  eprint        = {2308.08155},
  primaryclass  = {cs.AI},
  url           = {https://doi.org/10.48550/arXiv.2308.08155},
  doi           = {10.48550/arXiv.2308.08155}
}

@preprint{DBLP:journals/corr/abs-2306-03314,
  author        = {Yashar Talebirad and Amirhossein Nadiri},
  title         = {Multi-Agent Collaboration: Harnessing the Power of Intelligent {LLM} Agents},
  year          = {2023},
  archiveprefix = {arXiv},
  eprint        = {2306.03314},
  primaryclass  = {cs.AI},
  url           = {https://doi.org/10.48550/arXiv.2306.03314},
  doi           = {10.48550/arXiv.2306.03314}
}

@preprint{DBLP:journals/corr/abs-2503-13657,
  author        = {Mert Cemri and Melissa Z. Pan and Shuyi Yang and others},
  title         = {Why Do Multi-Agent {LLM} Systems Fail?},
  year          = {2025},
  archiveprefix = {arXiv},
  eprint        = {2503.13657},
  primaryclass  = {cs.AI},
  url           = {https://arxiv.org/abs/2503.13657},
  doi           = {10.48550/arXiv.2503.13657}
}

@preprint{DBLP:journals/corr/abs-2509-05396,
  author        = {Andrea Wynn and Harsh Satija and Gillian Hadfield},
  title         = {Talk Isn't Always Cheap: Understanding Failure Modes in Multi-Agent Debate},
  year          = {2025},
  archiveprefix = {arXiv},
  eprint        = {2509.05396},
  primaryclass  = {cs.AI},
  url           = {https://arxiv.org/abs/2509.05396},
  doi           = {10.48550/arXiv.2509.05396}
}

@preprint{DBLP:journals/corr/abs-2506-08292,
  author        = {Xie Yi and Zhanke Zhou and Chentao Cao and others},
  title         = {From Debate to Equilibrium: Belief-Driven Multi-Agent {LLM} Reasoning via {Bayesian Nash Equilibrium}},
  year          = {2025},
  archiveprefix = {arXiv},
  eprint        = {2506.08292},
  primaryclass  = {cs.AI},
  url           = {https://arxiv.org/abs/2506.08292},
  doi           = {10.48550/arXiv.2506.08292}
}

@preprint{DBLP:journals/corr/abs-2412-20523,
  author        = {Neil De La Fuente and Miquel Noguer i Alonso and Guim Casadell{\`{a}}},
  title         = {Game Theory and Multi-Agent Reinforcement Learning: From {Nash Equilibria} to Evolutionary Dynamics},
  year          = {2024},
  archiveprefix = {arXiv},
  eprint        = {2412.20523},
  primaryclass  = {cs.LG},
  url           = {https://doi.org/10.48550/arXiv.2412.20523},
  doi           = {10.48550/arXiv.2412.20523}
}

@preprint{DBLP:journals/corr/abs-2009-13081,
  author        = {Di Jin and Eileen Pan and Nassim Oufattole and Wei-Hung Weng and Hanyi Fang and Peter Szolovits},
  title         = {What Disease does this Patient Have? A Large-scale Open Domain Question Answering Dataset from Medical Exams},
  year          = {2020},
  archiveprefix = {arXiv},
  eprint        = {2009.13081},
  primaryclass  = {cs.CL},
  url           = {https://doi.org/10.48550/arXiv.2009.13081},
  doi           = {10.48550/arXiv.2009.13081}
}

@preprint{DBLP:journals/corr/abs-2505-09388,
  author        = {An Yang and others},
  title         = {{Qwen3} Technical Report},
  year          = {2025},
  archiveprefix = {arXiv},
  eprint        = {2505.09388},
  primaryclass  = {cs.CL},
  url           = {https://arxiv.org/abs/2505.09388},
  doi           = {10.48550/arXiv.2505.09388}
}

@preprint{DBLP:journals/corr/abs-2503-19786,
  author        = {{Gemma Team}},
  title         = {{Gemma 3} Technical Report},
  year          = {2025},
  archiveprefix = {arXiv},
  eprint        = {2503.19786},
  primaryclass  = {cs.CL},
  url           = {https://arxiv.org/abs/2503.19786},
  doi           = {10.48550/arXiv.2503.19786}
}

@preprint{DBLP:journals/corr/abs-2501-06322,
  author        = {Khanh{-}Tung Tran and Dung Dao and others},
  title         = {Multi-Agent Collaboration Mechanisms: A Survey of {LLMs}},
  year          = {2025},
  archiveprefix = {arXiv},
  eprint        = {2501.06322},
  primaryclass  = {cs.CL},
  url           = {https://arxiv.org/abs/2501.06322},
  doi           = {10.48550/arXiv.2501.06322}
}

@preprint{DBLP:journals/corr/abs-2505-12371,
  author        = {Yinghao Zhu and Ziyi He and Haoran Hu and others},
  title         = {{MedAgentBoard}: Benchmarking Multi-Agent Collaboration with Conventional Methods for Diverse Medical Tasks},
  year          = {2025},
  archiveprefix = {arXiv},
  eprint        = {2505.12371},
  primaryclass  = {cs.CL},
  url           = {https://arxiv.org/abs/2505.12371},
  doi           = {10.48550/arXiv.2505.12371}
}

@preprint{mou2024individual,
  title         = {From Individual to Society: A Survey on Social Simulation Driven by {Large Language Model}-based Agents},
  author        = {Xiangneng Mou and Yang Ding and Kunkun Ren and others},
  year          = {2024},
  archiveprefix = {arXiv},
  eprint        = {2404.12077},
  primaryclass  = {cs.MA},
  url           = {https://doi.org/10.48550/arXiv.2404.12077},
  doi           = {10.48550/arXiv.2404.12077}
}

@preprint{sharma2023understanding,
  title         = {Towards Understanding Sycophancy in Language Models},
  author        = {Mrinank Sharma and Meg Tong and Tomasz Korbak and others},
  year          = {2023},
  archiveprefix = {arXiv},
  eprint        = {2310.13548},
  primaryclass  = {cs.CL},
  url           = {https://doi.org/10.48550/arXiv.2310.13548},
  doi           = {10.48550/arXiv.2310.13548}
}

\appendix

\section{Dataset Details and Statistics}
\label{app:dataset_statistics}

To substantiate the empirical results presented in Section~\ref{sec:experiments} and facilitate reproducibility, we provide a granular description of the three evaluation datasets. These datasets were carefully curated to represent the distinct interaction topologies—adversarial, mixed, and open-ended—central to the BEACOF framework. A high-level statistical summary is provided in Table~\ref{tab:dataset_summary}.

\begin{table*}[t]
  \caption{Statistical summary of the datasets used in the evaluation scenarios. The ``Avg. Input Length'' denotes the average token count of the initial context provided to the agents.}
  \label{tab:dataset_summary}
  \begin{tabular}{lccccc}
    \toprule
    \textbf{Dataset} & \textbf{Scenario Type} & \textbf{Source} & \textbf{Samples} & \textbf{Key Modality} & \textbf{Avg. Input Length} \\
    \midrule
    Court Debate & Adversarial & AgentsCourt~\cite{DBLP:conf/emnlp/HeCWJ0XL0024} \& CJO & 100 & Unstructured Legal Text & $\sim$550 tokens \\
    MedQA & Mixed (Coop-Comp) & MedQA-USMLE~\cite{DBLP:journals/corr/abs-2009-13081} & 200 & Clinical Vignettes (5-Option) & $\sim$180 tokens \\
    Persona Chat & Open-Ended & PersonaChat~\cite{DBLP:conf/acl/KielaWZDUS18} & 100 & Dialogue History & $\sim$320 tokens \\
    \bottomrule
  \end{tabular}
\end{table*}

\subsection{Court Debate Dataset (Adversarial)}
The Court Debate dataset serves as a testbed for adversarial reasoning, comprising 100 criminal cases: 80 from the \emph{AgentsCourt} benchmark and 20 complex cases from \emph{China Judgements Online} (CJO). Unlike standard datasets, each entry captures a complete judicial reasoning cycle.

Formally, a case is modeled as a tuple $\mathcal{D} = (F, C_P, C_D, \mathcal{I}, Y)$. Here, $F$ details the factual narrative; $C_P$ and $C_D$ denote the initial claims of the Plaintiff and Defendant, respectively; $\mathcal{I}$ enumerates key adjudication issues (e.g., surrender claims); and $Y$ represents the ground truth verdict, including sentencing details. As detailed in Table~\ref{tab:court_distribution}, the dataset covers diverse domains to ensure robustness, spanning traffic offenses (e.g., Dangerous Driving, 28\%), property crimes (e.g., Theft, 15\%), and violent crimes (Intentional Injury, 10\%). This variety necessitates agent adaptability across distinct statutory contexts.

\begin{table}[H]
  \caption{Distribution of Case Causes in the Court Debate Dataset. The dataset covers a wide spectrum of criminal offenses to test adversarial robustness.}
  \label{tab:court_distribution}
  \resizebox{\columnwidth}{!}{%
  \begin{tabular}{lclc}
    \toprule
    \textbf{Case Cause} & \textbf{Count} & \textbf{Case Cause} & \textbf{Count} \\
    \midrule
    Dangerous Driving & 28 & Traffic Casualty & 12 \\
    Theft & 15 & Smuggling & 8 \\
    Fraud & 12 & Illegal Business/Gambling & 7 \\
    Intentional Injury & 10 & Others (e.g., Assault) & 8 \\
    \midrule
    \textbf{Total} & \multicolumn{3}{c}{\textbf{100}} \\
    \bottomrule
  \end{tabular}%
  }
\end{table}

\subsection{MedQA Dataset (Mixed)}
To evaluate professional consensus-building, we utilize the MedQA dataset, specifically sampling 200 questions from the MedQA-USMLE test set. This scenario embodies a mixed cooperative-competitive dynamic where agents must collaborate to diagnose but compete to eliminate incorrect distractors. 

The data structure for each instance consists of a complex clinical vignette, a set of five options (labeled A through E), and the ground-truth answer. The vignettes typically describe a patient's medical history, symptoms, and vital signs, requiring multi-hop reasoning to link physiological observations with pathological causes. The challenge lies in the high density of domain-specific terminology and the presence of plausible distractors, which compels agents to critically evaluate peer proposals rather than blindly accepting a consensus.

To mitigate potential bias from answer position, we verified the distribution of the ground-truth labels. As shown in Table~\ref{tab:medqa_distribution}, the correct answers are fairly uniformly distributed across the five options, ensuring that the agents' performance reflects genuine medical reasoning rather than statistical artifacts.

\begin{table}[H]
  \caption{Distribution of Correct Answer Keys in the MedQA Dataset. The balanced distribution prevents agents from exploiting positional bias.}
  \label{tab:medqa_distribution}
  \begin{tabular}{ccc}
    \toprule
    \textbf{Answer Key} & \textbf{Count} & \textbf{Percentage (\%)} \\
    \midrule
    A & 43 & 21.5 \\
    B & 40 & 20.0 \\
    C & 40 & 20.0 \\
    D & 49 & 24.5 \\
    E & 28 & 14.0 \\
    \midrule
    \textbf{Total} & \textbf{200} & \textbf{100.0} \\
    \bottomrule
  \end{tabular}
\end{table}

\subsection{Persona Chat Dataset (Open-Ended)}
For the open-ended social interaction scenario, we curated 100 interaction pairs from the \emph{PersonaChat} dataset to assess long-term consistency and lexical diversity. Unlike goal-oriented tasks, the objective here is maintaining a coherent identity.

Each sample is defined by two sets of \texttt{user\_personas}, which are lists of 3 to 5 sentences establishing the agent's background character (e.g., "I work as a stunt double," "I own a poodle"). Additionally, a \texttt{utterances} history is provided to prime the context. The evaluation metrics for this dataset are specifically designed to measure \emph{Persona Consistency}, using Natural Language Inference (NLI) to detect contradictions between generated responses and the assigned persona profile, and \emph{Response Diversity}, quantified via Distinct-N metrics.

A critical factor in social simulation is the depth of interaction. To demonstrate the complexity of the curated conversations, we analyze the distribution of dialogue lengths (measured in turns) in Table~\ref{tab:persona_length}. The majority of dialogues (67\%) exceed 20 turns, presenting a significant challenge for agents to maintain persona consistency over long contexts without succumbing to repetitive patterns.

\begin{table}[H]
  \caption{Distribution of Dialogue Lengths (Turns) in the Persona Chat Dataset. Long-context interactions challenge the agents' ability to maintain consistency.}
  \label{tab:persona_length}
  \begin{tabular}{ccc}
    \toprule
    \textbf{Dialogue Length (Turns)} & \textbf{Count} & \textbf{Percentage (\%)} \\
    \midrule
    11 -- 20 & 22 & 22.0 \\
    21 -- 30 & 49 & 49.0 \\
    31 -- 40 & 18 & 18.0 \\
    $> 40$   & 11 & 11.0 \\
    \midrule
    \textbf{Total} & \textbf{100} & \textbf{100.0} \\
    \bottomrule
  \end{tabular}
\end{table}

\section{Baselines}
\label{app:baselines}
To assess the effectiveness of belief-driven adaptation, we compare BEACOF against three representative frameworks that embody distinct positions on the interaction spectrum. First, representing Static Cooperation, we adopt CAMEL~\cite{DBLP:conf/nips/LiHIKG23}. This framework utilizes a role-playing inception prompting mechanism to facilitate task decomposition, serving as a strong baseline for scenarios requiring high coherence but often prone to sycophancy. Second, for Static Competition, we employ MAD~\cite{DBLP:conf/emnlp/Liang0JW00Y0T24}, which relies on "tit-for-tat" argumentation to surface errors. Specifically, in the Court Debate scenario, we utilize the AgentsCourt~\cite{DBLP:conf/emnlp/HeCWJ0XL0024} architecture, a domain-specialized implementation of MAD that strictly enforces the adversarial procedural logic of judicial systems. Additionally, we incorporate ReConcile~\cite{DBLP:conf/acl/ChenSB24} to represent Static Consensus-Building. Unlike pure cooperation, ReConcile employs a round-table voting mechanism with confidence refinement to aggregate diverse viewpoints. We employ ReConcile exclusively for the \textbf{MedQA} scenario. It is excluded from the Court Debate because its consensus-seeking objective fundamentally conflicts with the zero-sum nature of judicial proceedings, and from Persona Chat where open-ended diversity supersedes convergent solution-finding. To ensure a fair comparison and eliminate architectural bias, all methods utilize identical backbone LLMs, system prompts, and decoding parameters.

\section{Prompt Engineering Strategy}
\label{app:prompts}

To ensure the reproducibility of the BEACOF framework across diverse domains while maintaining a unified methodological presentation, we abstract the prompt engineering into a generalized schema. The framework operates on a dual-layer prompt architecture that separates the strategic coordination logic of the Meta-Agent from the execution logic of the Participant Agents. In implementation, domain-specific placeholders (e.g., \texttt{[DOMAIN\_CONTEXT]}) are instantiated with dataset-specific content (e.g., legal statutes for Court Debate, clinical vignettes for MedQA, or persona profiles for Persona Chat) at runtime.

\subsection{Meta-Agent Coordination Template}
The Meta-Agent functions as the mechanism designer, responsible for two critical tasks: (1) estimating the contextual payoff matrix to guide strategic equilibrium, and (2) evaluating participant outputs to update belief states. To support cross-scenario applicability, the prompt is structured to accept a variable set of evaluation dimensions $\mathcal{D} = \{d_1, \dots, d_n\}$ tailored to the specific task (e.g., \textit{Evidence Strength} for adversarial tasks or \textit{Empathy} for social tasks). The unified template is formalized as follows:

\begin{quote}
\ttfamily
\small
\textbf{System Instruction:} You are the Meta-Agent Coordinator overseeing a [SCENARIO\_TYPE] interaction. Your objective is to maintain the strategic equilibrium of the conversation.\\
\textbf{Contextual Input:} \\
- Global Context: [DOMAIN\_KNOWLEDGE\_BASE] \\
- Interaction History: [DIALOGUE\_HISTORY] \\
\textbf{Task 1 (Payoff Estimation):} Analyze the current state and estimate the potential utility for each participant if they adopt one of the following strategies: \textit{Cooperation}, \textit{Competition}, or \textit{Coopetition}. Assign a scalar value $u \in [0, 10]$ to each strategy-agent pair.\\
\textbf{Task 2 (Evaluation):} Assess the latest message based on the following domain-specific dimensions: [DIMENSION\_LIST]. For each dimension, provide a normalized score $s \in [0, 1]$ and a confidence score $\omega \in [0, 1]$.\\
\textbf{Output Requirement:} Return the results strictly in a structured JSON format containing keys for "payoff\_matrix" and "belief\_update\_vector".
\end{quote}

\subsection{Participant Agent Strategic Template}
Participant agents are designed to act as rational players within the incomplete information game. Unlike standard role-playing prompts that rely solely on persona descriptions, our template dynamically injects the game-theoretic signals computed by the PBE mechanism. This injection ensures that the agent's generative process is conditioned not only on its static role but also on the evolving belief states and strategic predictions. The generalized prompt template is defined below:

\begin{quote}
\ttfamily
\small
\textbf{Role Definition:} You are [AGENT\_ROLE], characterized by [PRIVATE\_PROFILE].\\
\textbf{Game State Injection:} \\
- \textbf{Current Beliefs:} Your subjective assessment of peers' capabilities is [BELIEF\_STATE]. \\
- \textbf{Strategic Signal:} The estimated payoffs for your potential actions are [PAYOFF\_MATRIX]. The predicted strategies of your opponents are [ACTION\_\\PREDICTION].\\
\textbf{Action Directive:} Based on the above information, you have resolved to adopt a [\textbf{SELECTED\_STRATEGY}] approach. \\
- If \textit{Cooperation}: Focus on information synthesis and consensus-building. \\
- If \textit{Competition}: Focus on critical argumentation and error exposure. \\
- If \textit{Coopetition}: Balance partial agreement with strategic rebuttal. \\
\textbf{Task:} Generate your response to [CURRENT\_QUERY] ensuring alignment with your selected strategy and private profile.
\end{quote}

By utilizing these templates, the framework standardizes the interaction flow across the Court Debate, MedQA, and Persona Chat scenarios, with the only variation being the semantic content of the bracketed placeholders.

\end{document}